\journalname{
Autonomous Agents and Multi-Agent Systems}
\begin{document}

\title{Preventing Social Disappointment in Elections
}


\author{Mohammad Ali Javidian\and Pooyan Jamshidi\and Marco Valtorta         \and
        Rasoul Ramezanian 
}


\institute{Mohammad Ali Javidian, Pooyan Jamshidi, and Marco Valtorta \at
              University of South Carolina, Columbia, SC, USA. \\
              Tel.: +1 (803) 777-4641\\
              Fax: +1 (803) 777-3767\\
              \email{javidian@email.sc.edu, \{pjamshid,mgv\}@cse.sc.edu}           
          \and
          Rasoul Ramezanian \at
              Ferdowsi University of Mashhad, Mashhad, Iran\\
              \email{rasool.ramezanian@gmail.com}
}

\date{Received: date / Accepted: date}

\maketitle

\begin{abstract}
Mechanism design is concerned with settings where a policy maker (or social
planner) faces the problem of aggregating the announced preferences of multiple agents into
a collective (or social), system-wide decision. One of the most important ways for aggregating
preference that has been used in multi-agent systems is election. In an election, the aim is to select the
candidate who reflects the common will of society. Despite the importance of this
subject, in some  situations, the result of the
election does not respect the purpose of those who execute it and the election leads
to dissatisfaction of a large amount of people and in some cases causes polarization in societies. To analyze these situations, we introduce a new notion called social disappointment and we show which voting rules can prevent it in elections. In addition, we propose new protocols to prevent social disappointment in elections. A version of the impossibility theorem is proved
regarding social disappointment in elections, showing that there is no voting rule for four or more candidates that simultaneously satisfies avoiding social disappointment and Condorcet winner criteria.
We give conditions under which one of our new protocols always selects the Condorcet winner under the assumption of single peakness. 
We empirically compare our protocols with seven well-known other voting protocols and  observe that our protocols are capable of preventing social disappointment and are more robust against manipulations.
\keywords{Mechanism design \and social choice theory \and voting procedures \and impossibility theorem \and social disappointment \and manipulation \and single-peakness assumption}
\end{abstract}

\section{Introduction}
\label{intro}
Social choice theory is concerned with the design and analysis of methods for
collective decision making \cite{suzumura}. Voting procedures are among the most important
methods for collective decision making. Voting procedures focus on the aggregation of
individuals' preferences to produce collective decisions. In practice, a voting procedure is
characterized by ballot responses and the way ballots are tallied to determine winners. Voters
are assumed to have clear preferences over candidates (alternatives) and attempt to maximize
satisfaction with the election outcome by their ballot responses. 

In social choice theory, it is emphasized that the result of an election must reflect the general will of voters. For this reason, in social choice theory, there are different and various criteria to address the efficiency and desirability of a voting procedure. Voting procedures are formalized by social choice functions, which map ballot response profiles into election
outcomes \cite{bramsfishburn}. However, despite the fact that this is an important matter, in the real world situations sometimes the winner of an election is the one who causes a vast dissatisfaction in the society and in some cases leads to polarization \cite{GF}. For example, according to Reilly \cite{Reilly}, in societies with deep ethnic cleavages such as Papua New Guinea, Sri Lanka, Northern Ireland, Estonia and Fiji, elections can encourage extremist ethnic appeals, zerosum political behaviour and ethnic conflict, and consequently often lead
to the breakdown of democracy. Reilly in \cite{Reilly} and Grofman and Feld in \cite{GF} examine the potential of \textit{electoral engineering} as a mechanism of conflict management in divided societies, and show that avoiding the selection of extremist candidates with substantial first round but little overall support is a central merit of voting rules like alternative vote and Coombs method  as compared to plurality.  As a result, it is desirable that the alternative defeated by a majority in pairwise contests against any
other alternatives  (i.e., Condorcet loser), and also the alternative that is at the bottom of at least half of the
individual preference profiles should not be elected-both to make less likely the election of candidates with limited overall support and to mitigate ethnic conflict in divided societies.

To avoid social disappointment in voting mechanisms, this paper introduces new criterion called \textbf{social disappointment}. In fact, we consider three factors relevant to picking (and designing) a voting rule to be used to select a candidate(s) from among a set of choices: (1) avoidance of Condorcet loser, (2) avoidance of social disappointment, (3) resistance to manipulability via strategic voting. Based on these factors, we analyze some well-known voting protocols and we show which ones avoid of Condorcet loser and which ones prevent social disappointment in elections. For example, since procedures like Borda and Copeland do not elect Condorcet loser in elections \cite{RobinsonUllman}, we recommend them to avoid of Condorcet losers. Also, we show that  Coombs method prevent social disappointment in elections. Furthermore, we design two new voting rules called  Least Unpopular (LU) and Least Unpopular Reselection (LUR) Procedures that both of them avoid  social disappointment in voting, while retaining monotonicity criterion as opposed to Coombs. Moreover, in elections, it is reasonable for a voter to rank order the candidates in order of proximity to the voter's ideal point if voters are required to submit a
rank-ordered ballot \cite{GF}. This assumption, in the social choice literature, is called \textit{single-peaked preferences}. Rather than considering the properties of voting rules in the abstract, we evaluate them in the politically realistic situations where voters are posited to have \textit{single-peaked preferences} over alternatives. We prove that, under the assumptions of single-peaked preferences and no party holding a majority of first place preferences, for four alternatives or fewer, the LUR rule will always select the Condorcet winner. A version of the impossibility theorem is stated and proved regarding to the notion of social disappointment, showing that there is no voting rule for four or more candidates that simultaneously satisfies avoiding social disappointment and Condorcet winner criteria. In order to address the robustness of discussed voting rules in this paper against manipulation, we design four scenarios of manipulation such as control and bribery, and evaluate the resistance of our proposed protocols i.e., LU and LUR against manipulations.

\textbf{Summary of results.}
Based on experimental results, as we expected, social disappointment does not happen for Coombs, LU ,and LUR methods. Despite the fact that Copeland and Borda cannot avoid social disappointment, both of them lead to social disappointment in fairly small number of cases, which is also consistent with theoretical results.  The best performance against manipulation via strategic voting belongs to the LU and LUR in all scenarios, showing that these procedures are more robust against manipulation. In all cases, plurality has the worst performance. 

\textbf{Novelty.}
The central point of this paper is the claim that some of the same arguments that have proposed to justify replacing plurality with alternative vote (a.k.a. instant runoff) \cite{BowlerGrofman,Horowitz,Reilly} apply with greater or equal force when replacing plurality with the Coombs, LU, or LUR procedures. Grofman and Feld in \cite{GF} made the argument that electoral reformers who have been advocating the alternative vote (AV)/instant runoff need to take a serious look at its close relative, the Coombs rule. In this paper, we argue that the LU and LUR procedures are directly comparable with Coombs with respect
to one of our three criteria i.e., avoidance of social disappointment and both of them are superior to Coombs with respect to one of the three criteria i.e., resistance to  manipulability via strategic voting.

Our main contributions are the following:
\begin{itemize}
    \item We introduce new criteria called social frustration (SF) and social disappointment (SD) respectively, and we show that which protocols avoid SF and SD in elections.
    \item We provide new voting protocols named the least unpopular (LU) and the least unpopular reselection
    procedures (LUR), and we show that both of them prevent SD in elections.
    \item We prove that under the assumptions of single-peaked preferences and no party holding a majority of first place preferences, when we have four parties or fewer, the LUR procedure always picks the Condorcet winner.
    \item We provide and prove an impossibility theorem by considering the notion of social disappointment.
    \item We experimentally compare the numerical results obtained from the implementation of nine voting rules, and we show that Coombs, LU, and LUR are capable to prevent SD, and also more successful than the other procedures regarding their resistance against manipulation in different scenarios. An R language package  that implements nine voting procedures and four manipulation scenarios, all described in this paper, along with the resulting experimental data are available at: \textcolor{blue}{\url{https://github.com/majavid/AAMAS2019}}.
\end{itemize}

\section{Basic Definitions and Concepts}
In this paper we consider elections with more than two candidates.
In order to define some central concepts such as ballot, profile, voting rule, and so on, we need to consider the following basic definitions and notations.

\textbf{Preference Relations.} In set theory $|A|$ denotes the number of elements in the finite set A. Any subset R of A $\times$ A is a binary relation on A, and in
this case we write "aRb" to indicate that (a, b) $\in$ R, and
we write "$\neg$(aRb)" to indicate that (a, b) $\not\in$ R. The binary relations we are most concerned with satisfy one or more of the following properties.
\begin{definition}\label{def1}
A binary relation R on a set A is:
\begin{equation*}
\begin{array}{lcl}
reflexive&if&\forall x\in A, xRx\\
symmetric&if&\forall x,y\in A, \textrm{ if } xRy \textrm{ then } yRx\\
asymmetric&if&\forall x,y\in A, \textrm{ if } xRy \textrm{ then } \neg(yRx)\\
antisymmetric&if&\forall x,y\in A, \textrm{ if } xRy \textrm{ and } yRx \textrm{ then } x=y\\
transitive&if&\forall x,y,z\in A, \textrm{ if } xRy \textrm{ and } yRz \textrm{ then } xRz\\
complete&if&\forall x,y\in A, \textrm{ either } xRy \textrm{ or } yRx \textrm{ (or both)}.
\end{array}
\end{equation*}
Also, a binary relation R on a set A is a \textit{weak ordering} (of A) if it is
transitive and complete and a \textit{linear ordering} (of A) if it is also antisymmetric. If R is a weak ordering of A, then the derived relations of
\textit{strict preference} P and \textit{indifference} I are arrived at by asserting that xPy iff $\neg$(yRx) and xIy iff xRy and yRx.
\end{definition}

\begin{definition}\label{def3}
If A is a finite non-empty set (which we think of as the set
of alternatives (candidates) from which the voters are choosing), then an A-ballot is a weak
ordering of A. If, additionally, n is a positive integer (where we think of N =$\{1, \dots , n\}$
 as being the set of voters), then an (A, n)-profile is an n-tuple of
A-ballots. Similarly, a linear A-ballot is a linear ordering of A, and a linear
(A, n)-profile is an n-tuple of linear A-ballots.
\end{definition}

The following definition collects some additional ballot-theoretic notation
we will need.
\begin{definition}\label{def4}
Suppose P is a linear (A, n)-profile, X is a set of alternatives
(that is, X $\subseteq$ A), and i is a voter (that is, i $\in$ N). Then:
\begin{equation*}
    \begin{array}{lcl}
      top_i(P)=x& iff & \forall x\in A, xR_iy \\
      max_i(X,P)=x& iff & x\in X, \forall y\in X: xR_iy\\
      min_i(X,P)=x& iff & x\in X, \forall y\in X: yR_ix
    \end{array}
\end{equation*}
\end{definition}

\textbf{Elections and desirable properties.} In the context of social choice theory, we assume that there is a set A whose elements are called alternatives (or
candidates)\footnote{We use "alternatives" and "candidates" interchangeably.} and typically denoted by a, b, c, etc. There is also
a set P whose elements is called people (or voters). Each person p in P
has arranged the alternatives in a list (with no ties) according to preference. Such a list will be called an individual preference list, or, for brevity, a ballot. A sequence of ballots is called a profile. The following definition uses the concepts of weak and linear orderings to formalize these election-theoretic terminologies. In this paper, we only consider elections with more than two candidates.
A voting procedure is a special kind of function where a typical input is a profile and an output is a single alternative, or a
single set of alternatives if we allow ties, or "NW" indicating that there
is no winner. Because of the importance of this notion, we record it here formally
as a definition, along with six desirable properties for comparison of different
voting rules.
\begin{definition}\label{def5}
Suppose that A is a non-empty set, n is a positive integer, N=$\{1,\dots,n\}$, and V is a function whose domain is the collection of all (A, n)-profiles.
Then V is a \textit{voting rule}\footnote{We use "voting rule", "voting procedure", and "voting protocol" interchangeably.} for (A, n) if, for every (A, n)-profile P, the election outcome
V(P) is a  subset of A. We say that V satisfies:
\begin{itemize}
    \item Always-a-winner Condition (AAW), if V(P)$\ne \emptyset$. In words, a voting procedure satisfies AAW condition if for every sequence of individual preference
    lists, the procedure outputs at least one winner.
    \item Condorcet winner condition (CWC) if 
    for every $x\in A$: $$x\in V(P) \textrm{ iff }\forall y\in A, W(x,y,P)\ge W(y,x,P),$$
    where $W(x, y, P)$ denotes the
    number of voters who rank x over y on their ballot in P i.e., $W(x, y, P)=|\{i\in N|xP_iy\}|.$
    Here, $P_i$ is a \textit{strict preference} on A such that $xP_iy$ iff $\neg(yR_ix)$, where $R_i$ is a binary relation on $A$. An alternative x is said to be a Condorcet winner
if it is the unique winner in Condorcet's method.
    \item Pareto condition, if for every pair of alternatives x, y $\in$ A, if x$P_i$y for every i, then y $\not\in$ V(P).
    A voting procedure is said to satisfy Pareto condition (or just
Pareto) if for every pair of x and y, all the voters
prefer x to y, then y is not a social choice. 
    \item Monotonicity (or V is monotone) if x $\in$ V(P), i $\in$ N and y $\in$ A with y P$_i$ x, assume that the profile Q exists such that P$|_{N- \{i\}}$ =Q$|_{N-\{i\}}$\footnote{If R
    is a binary relation on A and S $\subseteq$ A, then the \textit{restriction} of R to S, denoted R$|_S$,
    is the binary relation on S given by R$|_S$ = R $\cap$ (S $\times S$).}, P$_i|_{A- \{x,y\}}$ =Q$_i|_{A-\{x,y\}}$, and x Q$_i$ y then x $\in$ V(Q). A voting procedure is said
to be monotone if x is
a social choice and someone changes his or her preference list
by moving x up, then x should still be the social choice. 
    \item Independence of irrelevant alternatives (IIA) condition, if for every pair of
    (A, n)-profiles P and P', and every pair of alternatives x, y $\in$ A, if x $\in$ V(P) and y $\not\in$ V(P) and $R_i|_{\{x,y\}}=R'_i|_{\{x,y\}}$ for every i, then y $\not\in$ V(P'). A voting procedure satisfies independence of irrelevant alternatives (IIA) condition whenever the social preferences between alternatives x and y depend only on the individual preferences between x and y. The condition of independence
of irrelevant alternatives was first used by Arrow  in 1951 \cite{A1}. We say that an alternative is a \textit{Condorcet loser} if it would be defeated
by every other alternative in a kind of one-on-one contest that
takes place in a sequential pairwise voting with a fixed agenda\footnote{When we speak of a "fixed agenda,"
we are assuming we have a specified ordering of the alternatives.}.
    \item Condorcet loser criterion (CLC), if there exists x $\in$ A such that for all y $\in$ A, W(x,y,P) $<$ W(y,x,P), where $W(x, y, P)=|\{i\in N|xP_iy\}|$, then x $\not\in$ V(P)\footnote{From the definition of the Condorcet loser it is clear that if existed would be unique.}. Further,
we say that a social choice procedure satisfies the Condorcet loser
criterion (CLC) provided that a Condorcet loser is never among the social
choices. (see \cite{Taylor,Taylor2} for more details and examples).
\end{itemize}
\end{definition}

\textbf{Voting procedures.} In this paper, we consider nine examples of social choice procedures such as Condorcet's method \footnote{With Condorcet's method, an alternative x (if any) is among the winners
if for every other alternative y, at least half of the voters rank x over y
on their ballots. Our usage of the term \textit{Condorcet's method} follows \cite{Taylor}.}, Plurality rule, Hare system \footnote{This procedure was introduced in 1861 by Thomas Hare \cite{Taylor2}.} (is also known by names such as the "single transferable vote system" (stv) or "instant runoff voting"), Borda count \cite{Borda}, sequential pairwise voting with a fixed 
agenda (Seq. Pairs), Copeland \footnote{This procedure was introduced in
an unpublished 1951 note by A. H. Copeland \cite{Taylor2}.}, Coombs \cite{Coombs} and dictatorship.  For formal definitions and examples, see \cite{Taylor,Taylor2}. 
Here we just define Hare and Coombs procedures formally. In order to define these procedures we need the following notation and definition:

Suppose that $A$ is a set of alternatives, $n$ is a positive integer, and $V$ is a voting rule defined not for just $(A, n)$, but for $(A', n)$ for every $A'\subseteq$ A. Now, for every $(A, n)$ profile P, we can consider the
sequence $\langle W_1,\dots,W_{|A|}\rangle$, where $W_1 = V(P), W_2 = V(P|_{W_1}), W_3 = V(P|_{W_2})$, etc. Notice that 
\begin{itemize}
    \item [(i)] $A\supseteq W_1\supseteq W_2\supseteq \dots\supseteq W_{|A|}$, and
    \item[(ii)] if $W_j = W_{j+1}$, then $W_{j+1} =\dots = W_{|A|}$.
\end{itemize} 

The Hare procedure and the Coombs procedure are special cases of the
general idea of repeatedly using a single procedure to break ties among winners. 
\begin{definition}\label{def:Hare}
One repeatedly deletes the alternative or the alternatives
with the fewest first-place votes, with the last group of alternatives to be deleted
tied for the win. More precisely, $V$ is the Hare voting rule (also called "the
Hare system" or "the Hare procedure") if $V = V^*_H$ where $V_H(P)$ is the set of all
alternatives except those with the fewest first-place votes in $P$ (and all tie if all
have the same number of first-place votes).
\end{definition}

\begin{definition}\label{def:Coombs}
One repeatedly deletes the alternative or the alternatives with the most last-place votes, with the last group of alternatives to be
deleted tied for the win. More precisely, V is the Coombs voting rule (also called
"the Coombs procedure") if V = V$^*_C$ where V$_C$(P) is the set of all alternatives
except those with the most last-place votes in P (and all tie if all have the same
number of first-place votes).
\end{definition}

\textbf{Single-peaked preferences assumption}. It is well-known that certain domain restrictions enable the circumvention of impossibility theorems and can make computationally difficult problems easy~\cite{Brandtetal2016}. Arguably the most
well-known of these domain restrictions is Black's single-peakedness \cite{Black}. Preferences are said to be single-peaked
with respect to the order $<$ on the alternatives if the following holds: for every voter, as we move
away (according to $<$) from the voter's most-preferred alternative, the alternatives
will become less preferred for that voter.
If we sort voters according to their most preferred alternative (breaking ties arbitrarily), then the ((n+1)/2)th voter is called
the \textit{median voter}, where $n$ is the odd number of voters. His top choice is always identical to the Condorcet winner,
as was first observed by Black \cite{Black}.
An application of ideological preferences is to think of the outcome space as different policies in an ideological spectrum: policies from the Left vs policies from the Right; policies that are more liberal vs policies that are more conservative; policies that are pro free markets vs policies that are pro state intervention. Voters have single-peaked preferences if they have an ideal balance between the two directions of the ideological spectrum and if they dislike policies the farther away they are from their ideal point.
\begin{definition}
Let a $(A,n)$-profile $P$  and a set of voters $N$ be given.  We define the preference profile for $P$, denoted $PP$, as follows:
$PP$ is a function from each alternative in $A$ to the non-negative integers.  
A sequence of individual preference lists is said to satisfy the single peaked preferences assumption if there exists a function $PP$ such that, for each voter $i \in N$,
the graph of the application of the function $PP$ to each alternative has a single peak.
It is well known that a preference profile $PP$ is single-peaked if, for every $x, y, z\in A$, it holds that, if $(PP(x) < PP(y) < PP(z))$ or $(PP(z)< PP(y) < PP(x))$, then $x \succ_i y$ implies $y \succ_i z$ for every $i \in N$, where $\succ_i$ is the preference ordering for voter $i$.
\end{definition}

\section{Condorcet Loser and Social Disappointment in Voting Systems}\label{SFSD}
To avoid vast dissatisfaction and polarization in society because of the outcome of elections, we consider two intuitively compelling requirements that one may impose on voting procedures: (1) the alternative(s) defeated by a majority in pairwise contests against any other alternatives (Condorcet loser) should not be elected, and (2)  the alternative that is ranked at the bottom of at least half of the individual preference profiles should not be elected. In order to deal with these situations, we introduce new notion called \emph{social disappointment} in voting systems.

\begin{definition} 
(\textit{Social disappointment}): Suppose that A is a non-empty set (the set of alternatives), n is a positive integer (the number of voters), N=$\{1,\dots,n\}$ (the set of voters), and V is a voting rule for (A, n). If there exist x $\in$ A such that $|\{i\in N| \min_i(A,P)=x\}|\ge \frac{n}{2}$ and x $\in$ V(P) we say that \textit{social disappointment} (SD) in voting has occurred. Also, we say that V satisfies the \textit{social disappointment criterion} (SDC), if there exist x $\in$ A such that $|\{i\in N| \min_i(A,P)=x\}|\ge \frac{n}{2}$, then x $\not\in$ V(P). 
\label{def:sd}
\end{definition}
In words, social disappointment in voting happens when the outcome of an election (for 3 or more alternatives) includes an alternative which is at the bottom of at least half of the individual preference lists.
If in the definition of social disappointment instead of using the expression "at least half of" one uses the phrase "more than the half of", then a new definition of social disappointment will be obtained called \textit{strict social disappointment}. We leave this to the reader to verify that the Condorcet's method, the Borda count rule, the Seq. Pairs., and the Coombs' procedure can avoid strict social disappointment. Also, the occurrence of strict social disappointment will result in social disappointment. However, the reverse may not hold true.

In order to illustrate the key concepts Condorcet loser and SD, consider the following example\footnote{We have adapted this example from \cite{BCE}.}.
\begin{example}\label{ex1} 
Consider the following situation in which there are four Dutchmen, three Germans, and two Frenchmen who have to decide on which drink to be served for lunch (only a single drink will be served to all). 
  \begin{table}[ht]
  \label{t:ex1}
  \centering
  \begin{tabular}{ccc}
    \toprule
    Voters 1-4&Voters 5-7&Voters 8 and 9\\
    \midrule
    Milk & Beer&Wine\\
    Wine & Wine& Beer\\
    Beer & Milk & Milk\\
  \bottomrule
\end{tabular}
\end{table}

Now, which drink should be served based on these individuals' preferences? Milk
could be chosen since it has the most agents ranking it first. Milk is the winner
according to the plurality rule, which only considers how often each alternative is
ranked in the first place. However, the majority of agents will be dissatisfied with this
choice as they prefer any other drink to Milk. For such an occasion in terms of social choice theory one can say that the Condorcet Loser-in this example milk-is the social choice and this, on its own, is one of the undesirable situations in the social choice theory. 

Now to look at it from another perspective, Milk is the alternative which is at the bottom of more than half of the voters' preferences lists i.e., social choice is an alternative with the least social support and has the most social dissatisfaction or to be even more serious has the most social resentment. In other words, social disappointment has occurred. 
\end{example}

\subsection{Condorcet Loser and Voting Systems}\label{I6}
In this subsection, we provide an example that shows the likelihood of plurality rule to electing Condorcet loser is not very unusual and, indeed, happens in real-life political elections. Then, we argue that which voting rules are able to prevent choosing a Condorcet loser.

Grofman and Feld \cite{GF} pointed out that picking the Condorcet loser  as the winner is a poor voting method, indeed. Plurality rule can have this flaw \cite{BCE}. To show its importance, we provide an important historical voting situation where resulted in social disaster by electing a Condorcet loser as a winner of an election.

In a well-known US senatorial contest in the State of New York in 1970, the candidate from the Conservative party (Buckley) was a Condorcet Loser because he would have lost in one-to-one general election contest with
either the Democrat (Otinger) or the liberal Republican (Goodell). Yet Buckley won the general election with a plurality
vote despite his more liberal opponents receiving 60\% of the
vote. Although the liberal Republicans were a minority among the Republican
voters, most liberal Republicans preferred Goodell to Buckley, and in a general
election putting Buckley against Goodell, they would have been joined by a
high proportion of the Democrats who would also have clearly preferred
Goodell to Buckley. On the other hand, while Otinger might not have done as well as Goodell among liberal Republicans
in a one-to-one contest with Buckley, he would have made up for that by
getting virtually all the Democratic vote. We can see this as a situation involving
single-peaked preferences where the two liberal candidates (Goodell and
Otinger) split the liberal vote, allowing the least preferred choice among a
majority of the voters to win (see \cite[page 647]{GF}).

Therefore, in order to prevent  choosing a Condorcet loser in elections, we recommend to use the voting protocols such as Condorcet, Borda, Copeland, or Seq. Pairs, because all of them satisfy the Condorcet loser condition \cite[Chapter 4]{RobinsonUllman}\footnote{The authors of \cite{RobinsonUllman} use the term of anti-Condorcet
criterion instead of CLC.}.

\subsection{Social Disappointment in Voting Systems}\label{I7}
In this section, we justify the importance of the notion of social disappointment in real-world political elections, and show which voting protocols avoid SD. Then, we show that Condorcet loser and social disappointment are two distinct concepts and the occurrence of none of them implies the occurrence of the other one.

Some civilizations live in a way that the racial/ethnic/religious divisions can potentially cause polarization in the society. In situations where centrism is defined in terms of conciliatory views, the necessity of choosing the moderate candidates and keeping away from the extremist candidates-who are not supported by a broad spectrum of people-are shown \cite{GF}. Similarly, we design a mechanism that satisfies these properties and avoid SD in elections.

The main question here is that whether any of the given protocols in this article prevent SD in voting systems? The voting procedure that prevents social disappointment in voting systems is the \textit{Coombs method} (Coombs rule), which was introduced by the famous psychologist Clyde Coombs \cite{Coombs}.

Suppose that there is a candidate that is at the end of at least half of the preference
profiles. The Coombs' procedure deletes this alternative from the profile lists in the
first stage and does not let this candidate be elected, so social disappointment can not occur for this alternative. Coombs procedure satisfies the properties in Table~\ref{tab:properties}.
\begin{table}[ht]
\caption{Distinctive characteristics of Coombs method.}
\label{tab:properties}
\centering
\begin{tabular}{|c|c|c|c|c|c|c|c|}
\hline
 & {AAW} & {CWC} & {Pareto} & {Mono}& {IIA} & {SDC} &{CLC}\\
\hline
Coombs & YES & No & YES & No & No & YES & No\\
\hline
\end{tabular}
\end{table}

Grofman and Feld  \cite{GF} have shown that many conclusions suggested as a justification for replacing the plurality rule with the Hare system can be used as even stronger and more accurate evidences for Coombs' procedure. They agree that the supporters of the Hare system, like Donald Horowitz, have proven that when the Hare system is used the probability of a moderate candidate's winning the election instead of an extremist is higher than that when the plurality rule is used. A \emph{moderate candidate} here refers to the one who has the support of most of the voters but is not their first priority, and an \emph{extremist} here refers to the candidate who is the first priority of many voters but loses contests in one or more one-on-one competitions with the other candidates. The influence and permeation of Horowitz's idea led to its adoption in places such as Fiji and Papua New Guinea \cite{Reilly,GF}. 
To prevent polarization In societies with racial, religious and ethnic conflicts, election organizers are advised to help reduce social dissatisfaction and encourage the election of a moderate politician by using protocols like the Coombs procedure.

The following proposition shows that Condorcet loser and social disappointment in voting systems are two distinct concepts and none of them implies the other one.

\begin{proposition} 
The definitions given for Condorcet loser and social disappointment in voting systems are completely different; the occurrence of one of them never guarantees that of the other.\label{p1}
\end{proposition}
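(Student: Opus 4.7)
The plan is to prove both non-implications by constructing small explicit profiles. For each direction I fix the alternative set $A$, the profile $P$, and a voting rule $V$ (each valid under Definition~\ref{def5}), and then verify by inspection of pairwise tallies and last-place counts that exactly one of SF and SD occurs on that profile.

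For the direction that SD does not imply SF, I would take $A=\{a,b,c\}$, $n=4$, and the profile with ballots $a>b>c$, $a>b>c$, $b>c>a$, $c>b>a$, using plurality as $V$. Then $V(P)=\{a\}$ because $a$ has two first-place votes while $b,c$ have only one each; $a$ sits at the bottom of two of the four ballots, which meets the $n/2$ threshold in Definition~\ref{def:sd} and triggers SD. Both pairwise contests $a$-vs-$b$ and $a$-vs-$c$ are $2$--$2$ ties, so no alternative satisfies the strict inequality required by Definition~\ref{def:sf}; no Condorcet loser exists on this profile, and hence SF cannot occur under any $V$.

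For the reverse direction, SF does not imply SD, the tension runs the other way: a Condorcet loser $x$ is pairwise dominated by every other alternative, and with only three candidates an inclusion--exclusion argument on the sets $\{i : b\,P_i\,a\}$ and $\{i : c\,P_i\,a\}$ forces a large overlap, i.e.\ many voters who rank $a$ last. My strategy is therefore to use four candidates so that the three dominance requirements $b\,P_i\,a$, $c\,P_i\,a$, $d\,P_i\,a$ can be spread across different voters, each of whom places $a$ above exactly one other alternative, keeping $a$ in third position rather than last on most ballots. Concretely, take $A=\{a,b,c,d\}$, $n=5$, and the profile with ballots $a>b>c>d$, $b>c>a>d$, $c>d>a>b$, $d>b>a>c$, $b>c>d>a$. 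Each of $b,c,d$ beats $a$ pairwise by $3$--$2$, so $a$ is the Condorcet loser, while $a$ occupies the bottom position only on the fifth ballot, which is strictly less than $n/2$. Letting $V$ be the dictatorship of voter~$1$ (a voting rule in the sense of Definition~\ref{def5}) yields $V(P)=\{a\}$, so SF occurs while SD does not.

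The main obstacle is the second construction, because the intuition that a Condorcet loser is broadly disliked suggests it should sit at the bottom of most ballots; the key insight is that with four or more candidates one has enough room to redistribute the ``above $a$'' relations across distinct voters so that few ballots actually place $a$ last. All remaining checks are routine pairwise tallies and last-place counts.
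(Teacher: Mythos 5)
Your proof is correct, and the overall strategy---one explicit profile per non-implication---is exactly the paper's. Your first profile is essentially the paper's own: plurality on four three-candidate ballots elects an alternative that is last on exactly $n/2$ lists yet only ties its pairwise contests, so no Condorcet loser even exists and SF is impossible. For the second non-implication you diverge in two ways. First, you move to four candidates on the belief that with three candidates inclusion--exclusion ``forces a large overlap'' of voters ranking the Condorcet loser last; that belief is mistaken. If $b$ beats $a$ with $n_b>n/2$ supporters and $c$ beats $a$ with $n_c>n/2$ supporters, the forced overlap is only $n_b+n_c-n$, which can be kept far below $n/2$ by taking both margins close to $n/2$. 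The paper exploits exactly this with three candidates: 32 ballots $b>a>c$, 38 ballots $c>a>b$, and 10 ballots $b>c>a$ make $a$ the Condorcet loser (losing $38$--$42$ to $b$ and $32$--$48$ to $c$) while $a$ is last on only $10$ of $80$ ballots; the LU rule then elects $a$ since it has the fewest last-place votes. Second, you elect the Condorcet loser via the dictatorship of voter~1 rather than LU; that is a perfectly legitimate choice and arguably cleaner, since the paper's proof relies on a forward reference to a rule defined only in the following subsection. Your four-candidate, five-voter profile does verify as claimed ($b$, $c$, $d$ each beat $a$ by $3$--$2$, and $a$ is last on exactly one ballot), so the argument stands; only the claimed necessity of a fourth candidate should be dropped.
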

\begin{proof} 
Consider the following profile:
\begin{table}[ht]
\centering
\begin{tabular}{cccc}
\toprule
Voter 1 & Voter 2 & Voter 3 & Voter 4 \\
\midrule
a & a & b & c \\
c & b & c & b \\
b & c & a & a \\
\bottomrule
\end{tabular}
\end{table}

Applying the plurality rule, $a$ is the socially-selected candidate and SD will occur. However, $a$ is not the Condorcet loser, since it does not lose in a pairwise contest with $b$; in other words, $W(a,b,P)=|\{1,2\}|=2\not< W(b,a,P)=|\{3,4\}|=2.$

Now, consider the following profile:
\begin{table}[ht]
\centering
\begin{tabular}{ccc}
\toprule
32 voters & 38 voters & 10 voters \\
\midrule
b & c & b\\
a & a & c \\
c & b & a \\
\bottomrule
\end{tabular}
\end{table}

Obviously, $a$ is the Condorcet loser but, he is not at the end of at least half of the individual preference lists. Therefore, if one protocol (for instance LU protocol which will be introduced in the next subsection) considers $a$ as the winner of the social selection, despite its being the Condorcet loser, the SD will not occur.
\end{proof}

\subsection{The Least Unpopular (LU) and the Least Unpopular Reselection (LUR)}
First, we show that plurality, Borda, Condorcet, Copeland, Seq. Pairs., and Hare do not satisfy SDC.
\begin{proposition}
Plurality, Borda, Condorcet, Copeland, Seq. Pairs., and Hare method do not satisfy SDC.
\end{proposition}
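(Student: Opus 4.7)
The plan is to exhibit, for each of the six voting rules listed in the proposition, an explicit $(A,n)$-profile together with a candidate $x$ that is ranked last by at least $n/2$ voters and yet belongs to the elected set. I would try to minimise the number of examples needed by searching for a single symmetric profile that defeats several rules at once, and then patch the remaining rules individually.

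For plurality the work is already done by the milk/beer/wine profile in Example~\ref{ex1}, where milk is the plurality winner although it is ranked last by five of the nine voters. For Condorcet, Borda, Copeland, and Hare I would use the symmetric profile of six voters over three candidates in which three voters submit $a>b>c$ and three voters submit $c>b>a$; here $a$ is ranked last by exactly $n/2=3$ voters, so the hypothesis of SDC is met. Because every pair of candidates splits $3$-$3$ in the corresponding pairwise contest, this single profile immediately yields $a\in V(P)$ for all four rules at once. Condorcet places $\{a,b,c\}$ in $V(P)$ since $W(a,y,P)=3\ge n/2$ for every $y$; Borda and Copeland each assign identical scores to the three candidates, so every candidate co-wins; and under Hare the alternative $b$ (with zero first-place votes) is eliminated first, after which $a$ and $c$ split $3$-$3$ and $a$ co-wins. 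In each case $a\in V(P)$, which violates SDC.

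The remaining case is sequential pairwise voting with a fixed agenda, whose outcome depends on both the agenda and the tie-breaking convention along the agenda. I would handle it by a direct small construction: choose the agenda so that the disappointing candidate appears last in the order, and arrange the rest of the ballots so that it defeats (or ties and survives) each predecessor in turn. The main obstacle in the whole proof is Borda, which penalises a candidate ranked last by many voters with zero points from those ballots; pushing such a candidate into $V(P)$ therefore requires engineering an exact tie, which is precisely what the symmetric six-voter profile accomplishes. After that, each remaining verification is a routine bookkeeping check of the winner set produced by the rule in question.
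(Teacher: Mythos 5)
Your approach is correct in substance but genuinely more economical than the paper's. The paper uses a separate construction for each cluster of rules: Example~\ref{ex1} for plurality, a four-voter all-tie profile for Borda (together with a counting argument showing that an exact Borda tie among all candidates is in fact the \emph{only} way Borda can violate SDC, and only when $n$ is even), a six-voter four-candidate profile for Copeland, Condorcet and Seq.\ Pairs., and a ten-voter profile for Hare. You instead collapse Condorcet, Borda, Copeland and Hare into a single symmetric profile (three ballots $a>b>c$, three ballots $c>b>a$), and your verifications for those four rules are all correct under the paper's definitions: $a$ is last on exactly $n/2$ ballots, every pairwise contest splits $3$--$3$, so $a$ co-wins under Condorcet and Copeland, the Borda scores all equal $6$, and Hare eliminates $b$ and then declares $\{a,c\}$ tied. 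What your route buys is brevity; what it loses is the paper's stronger structural insight about Borda (which explains the empirical observation later that Borda violates SDC only rarely), though that extra analysis is not needed for the proposition as stated.

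The one genuine incompleteness is sequential pairwise voting: you describe a strategy (``choose the agenda so that the disappointing candidate appears last \dots'') but never exhibit the profile, so that case is not actually proved. It is easily closed, and in fact your own symmetric profile already does the job under the standard convention that tied alternatives both advance along the agenda: with agenda $a,b,c$ every contest is a $3$--$3$ tie, so all three alternatives survive to the end and $a\in V(P)$. Alternatively you could reuse the paper's six-voter four-candidate profile, in which $d$ is last on half the ballots yet ties every opponent $3$--$3$ and therefore survives any agenda. Either way, you should state the profile and the agenda explicitly rather than leave it as a plan.
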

\begin{proof}
For each voting procedure we have:
\begin{itemize}
    \item \textbf{Plurality}: see Example \ref{ex1} that shows plurality rule violates SDC.
    \item \textbf{Borda}: Borda rule fails to satisfy SDC. More precisely, the Borda count rule can always prevent SD in voting except in one case. In this case the social choice set will consist of all the alternatives. For example, consider the three alternatives 'a', 'b', and 'c' and the following sequences of two preference lists:
\begin{center}
\begin{table}[!hbp]
\centering
\begin{tabular}{cc}
\toprule
Voters 1 and 2& Voters 3 and 4\\
\midrule
a& c \\
b& b \\
c & a \\
\bottomrule
\end{tabular}
\end{table}
\end{center}
The alternatives 'a', 'b' and 'c' are the social choice when the Borda count procedure is used, but 'a' (also 'c'), is at the bottom of half of individual preference lists, and so social disappointment has taken place.

Note that there are 'k' candidates $(k\geq 3)$ and 'n' voters $(n\geq 3)$. The total sum of scores in Borda count rule is equal to:
$$n((k-1)+(k-2)+...+2+1+0)=\frac{n(k-1)k}{2}$$
Now consider that 'n' is an odd number, without loss of generality, and also consider that $x_1$ is a social choice and there is SD in voting, so $x_1$ must be at least the last preference in $\frac{n+1}{2}$ of individual preferences lists. Now consider the most optimistic possibility that in $\frac{n-1}{2}$ of the remaning lists $x_1$ is at the top. Thus Borda score for the alternative $x_1$ equals: $\frac{(n-1)(k-1)}{2}$. Now if this amount is subtracted from the whole Borda score it gives:
$$\frac{k(k-1)n}{2}-\frac{(k-1)(n-1)}{2}=\frac{(k-1)(kn-k+1)}{2}$$
Now if, in the most optimistic possibility, the remaining score is again shared among the other $k-1$ candidates equally, the amount of Borda score for every other candidate is $\frac{(k-1)n+1}{2}$ which clearly is more than Borda score for $x_1$, and this is against $x_1$ being the social choice. In the end if the number of voters is an odd number and voting is done according to Borda count rule, SD will definitely not occur.

Now consider that 'n' is an even number, without loss of generality and also consider that $x_1$ is a social choice and there is SD in voting. So, in the most optimistic possibility, Borda score for $x_1$ is equal to: $\frac{n(k-1)}{2}$. Now, if this amount is subtracted from the whole Borda score gives:
$$\frac{k(k-1)n}{2}-\frac{(k-1)n}{2}=\frac{(k-1)n(k-1)}{2}$$
If, in the most optimistic possibility, the remaining score again shared among the other $k-1$ candidates equally, the amount of Borda score for every other candidate would be $\frac{(k-1)n}{2}$ which is clearly equal to the Borda score for $x_1$ and thus the social choice set consists of all the candidates. Otherwise, if the remaining score is shared among every other candidate, $x_1$ can no longer be the social choice according to Borda count rule. The reason is that there is at least one candidate that has a score higher than that of $x_1$.
    \item \textbf{Copeland, Condorcet, and Seq. Pairs.}: Consider the following profile (each column shows a ballot of each voter):
    \begin{center}
\begin{table}[ht]
\centering
\begin{tabular}{cccccc}
\toprule
d & d & d & c & b & b\\
a & a & c & a & c & c \\
b & b & a & b & a & a \\
c & c & b & d & d & d \\
\bottomrule
\end{tabular}
\end{table}
\end{center}
The alternative 'd' is the unique social choice when the Condorcet method is used, but 'd' is at the bottom of half of the individual preference lists, and so social disappointment has taken place. All alternatives are in the social choice set when the Copeland method is used.  Again, social disappointment has taken place. Also, suppose that alphabetic ordering of the alternatives is the agenda when the Seq. Pairs method is used. So, $\{c,d\}$ is the social choice, and, again, SDC is violated.
\item \textbf{Hare}: Consider the three alternatives 'a', 'b', and 'c' and the following sequence of ten preference lists grouped into voting blocks of size four, three, and one:
\begin{center}
\begin{table}[ht]
\centering
\begin{tabular}{ccc}
\toprule
Voters 1-4& Voters 5-7& Voters 8-10\\
\midrule
a& c & b \\
b& b & c \\
c & a & a \\
\bottomrule
\end{tabular}
\end{table}
\end{center}
The alternative 'a' is in the social choice set when the Hare system is used. Although 'a' is the social choice, it is at the bottom of more than half of the individual preference lists and so social disappointment has taken place.
\end{itemize}
\end{proof}

Of the seven voting procedures we
considered, only Coombs's method satisfies SDC. However, Coombs does not satisfy the  monotonicity criterion, which formalizes the crucial idea that increased support for a candidate never hurts, and may help her to win. Otherwise, voters would be afraid to cast their ballots in an \emph{honest way}, aware that
a vote for their sincere first choice could harm the cause of electing her. We, therefore, want to pursue a voting procedure to satisfy  the monotonicity criterion \cite{RobinsonUllman}.  Suppose we are seeking a voting procedure that satisfies SDC and monotonicity simultaneously.

Now, we introduce a new voting protocol that satisfies monotonicity and prevents SD in voting systems. 
\begin{definition}(\textit{The Least Unpopular (LU) procedure})
The social choice(s) in the least unpopular procedure (LU) is (are) the alternative(s) that appear(s) the fewest number of times at the bottom of individual preference lists. More precisely, V is the LU procedure if V(P)=$\{x\in A|lp(x) \textrm{ is minimum} \}$, where $lp(x)=\{i\in N| \min_i(A,P)=x\}$.\label{lu}
\end{definition}

This protocol satisfies the AAW, Monotonicity, and social disappointment criterion, but does not satisfy the CWC, CLC, Pareto, and IIA criterion.
\begin{proposition}
The Least Unpopular procedure does not satisfy the CLC, CWC, IIA, and Pareto criteria.\label{lu1}
\end{proposition}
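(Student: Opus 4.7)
My plan is to prove each of the four failures by exhibiting an explicit counterexample, exploiting the fact that the LU rule depends only on the multiset of last-place votes. Thus for each criterion I can independently control which alternative is ``least often last'' while freely shaping the upper portions of the ballots to encode the property being violated. For Pareto, the shortest possible witness suffices: two identical ballots $a \succ b \succ c$ make $c$ absorb both last-place votes while $a$ and $b$ tie at zero, so $b$ enters $V_{LU}$ even though every voter strictly prefers $a$ to $b$.

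I would handle CLC and CWC with a single joint witness. The central trick is to place the prospective Condorcet loser $a$ in the middle slot of a bulk of symmetric ballots, keeping its last-place tally minimal, and then append a small pair of minority ballots that push $a$ to the bottom in just enough copies to create asymmetric pairwise losses against every other alternative. A mixture such as four copies of $b \succ a \succ c$ and four of $c \succ a \succ b$, supplemented by one $b \succ c \succ a$ and one $c \succ b \succ a$, should realise this balance: $a$ is last only twice, so $V_{LU} = \{a\}$, yet pairwise counting will show that both $b$ and $c$ beat $a$ while tying each other, making $\{b,c\}$ the Condorcet winners and $a$ the Condorcet loser under the ``iff'' form of Definition \ref{def5}.

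For IIA, I would present two profiles $P$ and $P'$ that agree on every voter's ordering of some designated pair but differ in the placement of a third candidate in a way that shifts last-place tallies. The choice $P = (a \succ b \succ c,\; c \succ b \succ a)$ versus $P' = (a \succ b \succ c,\; b \succ a \succ c)$ works: each voter preserves the $a$-vs-$b$ comparison across both profiles, yet $V_{LU}(P) = \{b\}$ whereas $V_{LU}(P') = \{a,b\}$, so $a$ switches from excluded to winning despite no change in pairwise opinion between $a$ and $b$.

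The main obstacle will be the combined CLC/CWC counterexample, since it requires a Condorcet loser that is itself rarely ranked last, a tension that forces a delicate split between a bulk block hiding the loser in the middle and a smaller block that supplies the asymmetric pairwise defeats. Once that balance is struck, the remaining failures reduce to recounting last-place votes on very short profiles.
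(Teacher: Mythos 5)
Your proposal is correct: all four counterexamples check out, and the overall strategy — exhibiting an explicit profile for each criterion and exploiting the fact that LU depends only on last-place tallies — is exactly the paper's strategy, just with different witnesses. Your Pareto witness (two copies of $a \succ b \succ c$, giving $V_{LU}=\{a,b\}$) is more economical than the paper's four-voter, four-candidate profile; your IIA pair is the same idea as the paper's (move a third candidate so as to shift last-place counts while freezing the $a$-vs-$b$ comparisons). The one substantive divergence is your joint CLC/CWC witness. The arithmetic is right: $lp(a)=2<4=lp(b)=lp(c)$ so $V_{LU}=\{a\}$, while $a$ loses $4$--$6$ to each of $b$ and $c$, so $a$ is the Condorcet loser and CLC fails. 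For CWC, however, your profile has no \emph{unique} Condorcet winner ($b$ and $c$ tie $5$--$5$), so your violation rests entirely on the literal ``iff'' form of the definition (the LU output $\{a\}$ differs from the Condorcet-method output $\{b,c\}$). That is valid under Definition~\ref{def5} as stated, but it is weaker than the paper's CWC witness (two voters with $a\succ b\succ c$ and one with $b\succ c\succ a$), in which $a$ is a strict Condorcet winner that LU nonetheless fails to elect; the paper's version also survives the more common reading of CWC as ``a unique Condorcet winner, when it exists, must be chosen.'' If you want your proof to be robust to that reading, keep your joint profile for CLC but add a separate CWC witness with a strict Condorcet winner.
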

\begin{proof} 
Consider the four alternatives 'a', 'b', 'c', and 'd' and the following profile:
\begin{center}
\begin{table}[ht]
\centering
\begin{tabular}{ccc}
\toprule
Voters 1 and 2& Voter 3& Voter 4\\
\midrule
a & c & d \\
b & a & a \\
c & b & b \\
d & d & c\\
\bottomrule
\end{tabular}
\end{table}
\end{center}
The alternatives 'a' and 'b' are the social choices when the Least Unpopular procedure is used. Thus, the alternative 'b' is in the set of social choices even though everyone prefers 'a' to 'b'. This shows that the Pareto criterion fails.
Now consider the three alternatives 'a', 'b', 'c' and the following profile:
\begin{center}
\begin{table}[ht]
\centering
\begin{tabular}{cc}
\toprule
Voters 1 and 2& Voter 3\\
\midrule
a & b \\
b & c \\
c & a \\
\bottomrule
\end{tabular}
\end{table}
\end{center}
The alternative 'b' is the social choice when the Least Unpopular procedure is used. However, 'a' is clearly the Condorcet's winner, defeating each of the other alternatives in one-on-one competitions. Since the Condorcet's winner is not the social choice in this situation, it is clear that the Least Unpopular procedure does not satisfy the Condorcet's winner criterion. In other words, the alternative 'b' is a non-winner. Now suppose that voter 3 changes his or her list by interchanging the alternatives 'a' and 'c'.
The lists then become:
\begin{center}
\begin{table}[ht]
\centering
\begin{tabular}{cc}
\toprule
Voters 1 and 2& Voter 3\\
\midrule
a & b \\
b & a \\
c & c \\
\bottomrule
\end{tabular}
\end{table}
\end{center}
Notice that the alternative 'b' is still above 'a' in the third voter's list. However, the Least Unpopular procedure now has 'a' and 'b' tied as the winner. Thus, although no one changed his or her preference regarding the alternatives 'a' and 'b', the alternative 'a' changed position from being a non-winner to being a winner. This shows that the independence of irrelevant alternatives fails in the Least Unpopular procedure.

Now, consider the following profile:
\begin{table}[ht]
\centering
\begin{tabular}{ccc}
\toprule
32 voters & 38 voters & 10 voters \\
\midrule
b & c & b\\
a & a & c \\
c & b & a \\
\bottomrule
\end{tabular}
\end{table}
Obviously, $a$ is the Condorcet loser but LU protocol considers $a$ as the winner of the social selection. This shows that LU does not satisfies CLC.
\end{proof}

\begin{proposition}
The Least Unpopular procedure satisfies the SDC, AAW, and Monotonicity criterion.\label{lu2}
\end{proposition}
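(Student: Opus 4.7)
The plan is to verify the three criteria in turn, exploiting the identity $\sum_{y\in A} lp(y)=n$ (each voter contributes exactly one last-place vote) together with the characterization $V(P)=\arg\min_{y\in A} lp(y)$ from Definition~\ref{lu}. AAW is immediate: $lp:A\to\mathbb{N}$ is a function on a finite non-empty set and therefore attains its minimum, so $V(P)\ne\emptyset$. For SDC I would argue by contradiction: if $x\in V(P)$ satisfied $lp_P(x)\ge n/2$, then $\sum_{y\ne x}lp_P(y)\le n/2$, and because the paper restricts attention to $|A|\ge 3$ the set $A\setminus\{x\}$ has at least two elements; averaging gives some $y^*\ne x$ with $lp_P(y^*)\le n/4$, which is strictly less than $lp_P(x)\ge n/2$ for $n\ge 1$, contradicting the minimality of $lp_P(x)$.

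For monotonicity, let $Q$ arise from $P$ by voter $i$ moving $x$ one place upward, swapping the adjacent pair $y,x$ in $P_i$ so that $xQ_iy$. The decisive observation is that this local change can affect $lp$ only through the identity of $\min_i(A,\cdot)$, and that identity changes only if $x$ occupied the bottom of $P_i$. If $\min_i(A,P)\ne x$, the swap leaves the bottom untouched and $lp_Q=lp_P$, giving $V(Q)=V(P)\ni x$. If $\min_i(A,P)=x$, then $y$ is immediately above $x$ in $P_i$ and after the swap $y$ occupies the bottom, yielding $lp_Q(x)=lp_P(x)-1$, $lp_Q(y)=lp_P(y)+1$, and $lp_Q(z)=lp_P(z)$ otherwise; the decrement on $x$ together with the increments on the remaining alternatives preserves (indeed strengthens) the minimality of $lp_P(x)$, so $x\in V(Q)$.

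The main obstacle is disambiguating the formal monotonicity definition of Definition~\ref{def5}, which only freezes the relative order on $A\setminus\{x,y\}$ and, read permissively, would allow $Q_i$ to place $x$ arbitrarily low as long as $x$ stays above $y$. Under that literal reading LU is \emph{not} monotone: if $lp_P(x)$ ties the minimum with some $c\in A\setminus\{x,y\}$ that equals $\min_i(A,P)$, an admissible $Q_i$ can push $y$ to the bottom, strip $c$ of a last-place vote, and make $c$ the unique winner. I would therefore preface the proof by adopting the adjacent-promotion semantics consistent with the paper's verbal description ``moving $x$ up,'' under which the case analysis above is both straightforward and complete.
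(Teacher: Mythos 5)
Your proof is correct and follows essentially the same route as the paper's: finiteness of $A$ for AAW, an averaging/counting argument for SDC (which the paper compresses to the bare assertion ``since $|A|\ge 3$, $a$ is not a minimizer of $lp$'' --- your explicit chain $lp(y^*)\le n/4 < n/2\le lp(x)$ is exactly the missing justification), and for monotonicity the observation that promoting $x$ never increases $lp(x)$ nor decreases any other candidate's last-place count. Your caveat about the permissive literal reading of the monotonicity clause in Definition~\ref{def5} is well taken, but the paper's verbal gloss (``moving $x$ up'') makes clear that the adjacent-promotion semantics you adopt is the intended one, and under it your case analysis matches the paper's argument.
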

\begin{proof}
Since $|A|$, the number of candidates is a finite number, so V(P)=$\{x\in A|lp(x) \textrm{ is minimum} \}\ne \emptyset$, which means LU satisfies AAW criterion.

Assume that alternative $a$ appears at the bottom of at least half of the individual preference lists. Since $|A|\ge 3$, so $a\not\in\{x\in A|lp(x) \textrm{ is minimum} \}$, $a\not\in V(P)$, which means LU satisfies SDC.

LU satisfies monotonicity because raising
candidate $a$ up on some preference lists can never increase the number
of last-place votes that $a$ receives, nor can it decrease the number of
last-place votes that any other candidate receives. Hence, if $a$ wins the
election before such a change, she wins afterwards as well.
\end{proof}

The Pareto criterion is important in the context of Arrow's impossibility theorem \cite{A1,arrow1}. However, LU does not satisfy this criterion. Suppose we are seeking a voting procedure that satisfies SDC, Pareto, and monotonicity simultaneously.
Here, we introduce a new voting protocol that satisfies Pareto and monotonicity, and also prevents social disappointment in voting systems.

\begin{definition}(\textit{Least Unpopular Reselection (LUR)})
First, the set of alternatives appearing least often at the bottom of individual preference lists (i.e., the set of least unpopular alternatives) is chosen. If this set has only one member, it is the social choice. Otherwise, the remaining alternatives (if any) are removed and the procedure LU is run for the set obtained from the previous stage. This procedure is repeated until it cannot be continued (because a new set of alternatives cannot be produced). The set obtained in the last repetition is the set of social choice. Formally, one repeatedly removes the alternative or the alternatives except those with the least last-place votes, with the last group of alternatives to be
removed was tied for the win. More precisely, V is the least unpopular reselection (LUR) if V = V$^*_{LUR}$, where V$_{LUR}$(P) is the set of all alternatives
with the least last-place votes in P (and all tie if all have the same
number of first-place votes).\label{lur}
\end{definition}

To clarify the definitions of LU and LUR, we illustrate these two voting procedures with a single example.

\begin{example}\label{ex2}
Consider the following voting profile:
\begin{table}[ht]
\centering
\begin{tabular}{ccc}
\toprule
Voter 1 & Voter 2 & Voters 3 and 4  \\
\midrule
a & a & b \\
c & b & c \\
b & c & a \\
\bottomrule
\end{tabular}
\end{table}
For each of LU and LUR procedures, we calculate what the resulting social choice is.

\noindent LU: Since $lp(a)=2, lp(b)=1$, and $lp(c)=1$ then $\{b,c\}$ is the the social choice when the LU procedure is used.

\noindent LUR: We decide which alternative(s) occur(s) at the bottom of the fewest lists and remove the remaining alternative(s) from all the lists. In this phase, $a$ is removed from each list leaving the following:
\begin{table}[ht]
\centering
\begin{tabular}{ccc}
\toprule
Voter 1 & Voter 2 & Voters 3 and 4  \\
\midrule
c & b & b \\
b & c & c \\
\bottomrule
\end{tabular}
\end{table}

Now, $c$ occurs at the bottom of 3 of 4 lists, and thus is
eliminated. Hence, $b$ is the social choice when the LUR procedure is used.
\end{example}

\begin{proposition}
LUR protocol satisfies the AAW, Monotonicity, Pareto criteria, and SDC, but does not satisfy the CWC, CLC, and IIA criteria.
\end{proposition}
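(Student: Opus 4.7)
The plan is to split the proposition into three groups: the straightforward positives (AAW and SDC) follow from a one-round analysis, the three negatives (CWC, CLC, IIA) can be handled by re-using the counterexample profiles from the proof of Proposition~\ref{lu1}, and the two subtle positives (Monotonicity and Pareto) require short inductive arguments tracking how a single LU step interacts with the iterated restriction.

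For AAW, since LU always returns a non-empty subset of its input by Proposition~\ref{lu2}, the chain $W_1\supseteq W_2\supseteq\cdots$ is a weakly decreasing sequence of non-empty finite sets and must stabilize at a non-empty terminal $V(P)$. For SDC, if $x$ sits at the bottom of at least $n/2$ ballots then its round-$1$ last-place count is $\ge n/2$, but the total of all last-place counts is $n$ with $|A|\ge 3$ alternatives, so at least one alternative has count strictly less than $n/2$; hence $x$ is not a round-$1$ minimum and is eliminated. For the three failure statements, the profiles already exhibited in the proof of Proposition~\ref{lu1} can be transferred verbatim: in each of them the first LU round already isolates the set shown there, and subsequent rounds (if any) only shrink that set without rescuing the intended winner, so LUR inherits the same failures of CWC, CLC, and IIA.

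Monotonicity is the step I expect to be the main obstacle. The plan is induction on $|A|$. Suppose $x\in V(P)$ and $Q$ differs from $P$ only by swapping $x$ and $y$ in voter~$i$'s ballot, with $yP_ix$ in $P$. In round~$1$, only voter~$i$'s last-place vote can change: $x$'s count either stays the same or drops by exactly $1$ (the latter precisely when $x$ was at the very bottom of voter~$i$'s $P$-ballot), $y$'s count correspondingly stays the same or rises by $1$, and no other count is affected. If $x$'s count drops, then it falls strictly below the previous round-$1$ minimum, forcing $W_1(Q)=\{x\}$ and hence $V(Q)=\{x\}$. Otherwise $W_1(Q)=W_1(P)$, and the change on the restricted profile is either vacuous (when $y\notin W_1$) or exactly the same $\{x,y\}$-swap on the strictly smaller ground set $W_1$; the inductive hypothesis then delivers $x\in V(Q)$.

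Finally, for Pareto, assume $xP_iy$ for every voter $i$. The key invariant is that whenever $y$ still belongs to the current set $W_k$, the restricted last-place count of $x$ is $0$, because every voter ranks $y$ (which is in $W_k$) below $x$, so $x$ cannot be last in $W_k$ on any ballot. A simple induction then yields $y\in W_k\Rightarrow x\in W_k$ for every $k$. If $y$ were to survive into the terminal set $W^{\ast}$, then $x\in W^{\ast}$ as well, so $|W^{\ast}|\ge 2$; but in the restricted profile on $W^{\ast}$ the last-place counts sum to $n>0$ while $x$'s count is $0$, so some alternative in $W^{\ast}$ has strictly positive count and would be eliminated, contradicting the termination condition. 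Hence $y\notin V(P)$. The only nontrivial bookkeeping lies in the monotonicity induction, where one must verify carefully that the restricted $\{x,y\}$-swap genuinely satisfies the hypotheses of the inductive call.
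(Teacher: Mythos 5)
Your decomposition matches the paper's in substance: the negative claims (CWC, CLC, IIA) are discharged by the same counterexample profiles as for LU (and they do transfer --- in each of those profiles the first LUR round already produces $\{b\}$, $\{a\}$, and $\{b\}$ then $\{a,b\}\to\{a\}$ respectively, so the violations persist), AAW and SDC follow from the round-one counting argument, and Pareto rests on the same observation the paper uses, namely that an alternative dominated on every ballot keeps the dominating alternative's last-place count at zero forever. Where you genuinely diverge is monotonicity: the paper simply points back to the single-round LU argument of Proposition~\ref{lu2}, which is a real gloss, since single-round monotonicity does not automatically survive iteration (Hare and Coombs are the standard cautionary examples, and the paper itself records Coombs as non-monotone). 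Your induction on $|A|$ --- either $x$'s last-place count strictly drops and $x$ becomes the unique round-one survivor, or nothing changes in round one and the problem recurses to $W_1$ --- is the right way to actually close that gap, and your terminal-set argument for Pareto (the counts on $W^{\ast}$ sum to $n>0$ while $x$'s is $0$, so not all of $W^{\ast}$ can tie) is a cleaner finish than the paper's one-liner.

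One sub-claim in the monotonicity induction is wrong as stated, though you flag the step as the delicate one. When the round-one counts are unchanged and $y\notin W_1$, the induced change on $P|_{W_1}$ is \emph{not} vacuous in general: if some $z\in W_1$ lies between $x$ and $y$ on voter $i$'s ballot, swapping $x$ and $y$ reverses the order of $x$ and $z$ in the restriction. The fix is routine --- the restricted ballot is obtained from the old one by moving $x$ upward past a block of $W_1$-alternatives, which decomposes into a sequence of single upward swaps of $x$, each of which is itself a legal monotonicity move on the smaller ground set, so the inductive hypothesis can be applied once per swap. With that repair (and the degenerate case $W_1=A$, where the iteration has already stabilized, handled separately), your argument goes through and is strictly more complete than the proof in the paper.
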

\begin{proof}
If candidate $a$ is
ahead of candidate $b$ on every preference list, candidate $a$ has no last-place votes and therefore $b$ suffers elimination in some round. Hence the
LUR method satisfies Pareto.

The rest of proof is the same as proof of Proposition \ref{lu1} and \ref{lu2}.
\end{proof}

\subsection{Properties of our Voting Procedures}

Some of results from this paper are summarized in Table \ref{summary}. This table provides the answers to 63 questions of the form
"Does method X satisfy criterion Y ?" The rows are indexed by 9 methods. The columns are indexed by 7 criteria: always-a-winner, the Condorcet winner, Pareto, monotonicity, independence of irrelevant alternatives, social disappointment, and Condorcet loser criterion.

Based on the comparison in Table \ref{summary}, first, no method satisfies all 7 of the criteria. This should not surprise us, because Taylor shows that no method can simultaneously satisfy AAW, IIA, and CWC \cite{Taylor3}. Even if we are willing to dispense with IIA, we still face a difficulty because Theorem \ref{thm:impossibility} in the next section shows that no method (for four and more candidates) can satisfy CWC and SDC simultaneously. It should be noted that on of the advantage of LU and LUR methods in compare with Coombs is that LU and LUR satisfy monotonicity criterion but Coombs method does not. 

\begin{table}[ht]
\caption{Comparison of voting procedures.}\label{summary}
\centering
\begin{tabular}{|l|c|c|c|c|c|c|c|}
\hline
 & AAW & CWC & Pareto & Mono& IIA & SDC&CLC\\
\hline
Condorcet & NO & YES & YES & YES & YES & NO & YES\\
\hline
Plurality & YES & NO & YES & YES & NO & NO & NO\\
\hline
Borda & YES & NO & YES & YES & NO & NO & YES\\
\hline
Hare & YES & NO & YES & NO & NO & NO & NO\\
\hline
Seq. Pairs & YES & YES & NO & YES & NO & NO & YES\\
\hline

Copeland & YES & YES & YES & YES & NO & NO & YES \\
\hline
\rowcolor{yellow}
 Coombs & YES & NO & YES & NO & NO & YES & NO\\
\hline
\rowcolor{blue!25}
LU & YES & NO & NO & YES & NO & YES & NO\\
\hline
\rowcolor{orange!50}
LUR & YES & NO & YES & YES & NO & YES & NO \\
\hline
\end{tabular}
\end{table}

\section{Choice of Condorcet Winners Under the Single-peaked Preferences Assumption}
Supporters of AV (such as Donald Horowitz) argue that, compared to plurality, AV has a greater likelihood of the end result being a victory by a `moderate' candidate (i.e. one who enjoys broad support but is not necessarily the
first place choice of many voters) as opposed to an `extremist' candidate (i.e.
one who may have first place support from a substantial number of voters but
who would lose to one or more of the other candidates in the contest if there
were to be a two-candidate head on head contest). In social choice terminology
\cite{Black,Saari}, we may translate this into the claim that AV has a
higher probability of choosing the Condorcet winner (when one exists) than does
plurality. The basic argument is simply that, under AV, unlike what is true for
plurality, preferences for moderate candidates who are Condorcet winners, but
who may not be given much first place support, have at least the potential to be
important in deciding election outcomes by virtue of the sequential nature of the
ballot transfer process \cite{GF}. 

Grofman and Feld \cite{GF} showed that under the
single-peaked preferences assumption, for four alternatives or fewer, AV is always as likely or more likely to
select the Condorcet winner than plurality. However, under the
same assumptions, the Coombs rule will always select the Condorcet winner regardless of
the number of alternatives. For the voting procedure LUR we prove the following proposition.
\begin{proposition}
Under the assumptions of single-peaked preferences and no party holding a majority of first place preferences, when we have four parties or fewer, a candidate of the median party is always being selected when voting is conducted under the LUR procedure, and this property does not hold in general when voting is conducted under first-past-the-post\footnote{A first-past-the-post (FPTP) voting system is one in which voters indicate on a ballot the candidate of their choice, and the candidate who receives the most votes wins.}. 
\end{proposition}
\begin{proof}
The proof of this proposition for two alternatives is trivial, so we
focus on the case of three and four alternatives.

First, we show that under the simplifying assumptions of three (or four) parties and
single-peaked preferences, the median party (the majority winner in pairwise contest) must be a moderate party. Our proof will be by contradiction. Without loss of generality imagine that the leftmost party is the median party. By this hypothesis the leftmost party (since it is supported by the median voter) receives a majority of first place votes. However, we have posited that no party has a majority of first place preferences. This contradiction demonstrates that, for single-peaked preferences among three (or four) alternatives, when there is no party with a majority of first-place preferences, no extreme party can be the median party. This means that there are alternatives on both sides of the alternative supported by the
median voter, then that alternative is no voter's last choice. So, the only last choices are the extremes at the two ends of the dimension. So, the Condorcet winner (i.e., the alternative favored by the median voter) cannot be the alternative with the most last place votes. 
\begin{itemize}
    \item[Case I.] With three alternatives, the the alternative favored by the median voter (which is none of the extremes at the two ends of the dimension) is the alternative with the least last place votes i.e., zero. So, using the LUR procedure leads to the electing of the Condorcet winner in this case. However, as the following example shows, there are situations that the outcome of the election by the FTPT procedure is not the Condorcet winner in this case. 
    \begin{table}[ht]
\centering
\begin{tabular}{ccc}
\toprule
Voters 1 and 2 & Voter 3 & Voters 4 and 5  \\
\midrule
a & b & c \\
b & a & b \\
c & c & a \\
\bottomrule
\end{tabular}
\end{table}

This profile satisfies the assumptions, because the preferences are single-peaked and and no party holding a majority of first place preferences. The following graphs show three preferences that are single-peaked over outcomes $\{a,b,c\}$. On the vertical axis, the number represents the preference ranking of the outcome, with 1 being most preferred. 
\begin{center}
    \begin{tikzpicture}[domain=0:2]
    \draw[ultra thin,color=gray] (-0.1,-0.1) grid (2,2);
    \draw[->] (-0.1,0) -- (2.2,0);
    \draw[->] (0,-.1) -- (0,2.2);
    \draw	(0,0) node[anchor=north] {a}
		(1,0) node[anchor=north] {b}
		(2,0) node[anchor=north] {c}
		(0,0) node[anchor=east] {3}
		(0,1) node[anchor=east] {2}
		(0,2) node[anchor=east] {1};
    \draw[thick,color=red] plot[id=x] function{x};
    \draw[thick,color=blue] (0,1) -- (1,2) -- (2,0);
    \draw[thick,color=orange] plot[id=exp] function{2-x};
\end{tikzpicture}
\end{center}

'b' is the Condorcet winner and the choice of the median voter. However, the outcome of the FPTP procedure is not 'b'.

    \item[Case II.] Since the Condorcet winner (i.e., the alternative favored by the median voter) cannot be the alternative with the most last place votes, using the LUR procedure leads to removing the extremes at the two ends of the dimension. So, the Condorcet winner is never eliminated at the first round of the LUR procedure and therefore will be declared a winner at the second round in which there are only two alternatives left, one of whom is the alternative preferred by the median voter, since that alternative is the Condorcet winner (and thus defeats each and every other alternative in paired competition), it will necessarily be chosen. However, as the following example shows, there are situations that the outcome of the election by the FTPT procedure is not the Condorcet winner in this case. 
    \begin{table}[ht]
\centering
\begin{tabular}{cccc}
\toprule
Voters 1-3 & Voters 4 and 5 & Voter 6 &Voters 7-9  \\
\midrule
a & b & c & d\\
b & a & b & b\\
c & c & a & c\\
d & d & d & a\\
\bottomrule
\end{tabular}
\end{table}

This profile satisfies the assumptions, because the preferences are single-peaked and and no party holding a majority of first place preferences. The following graphs show three preferences that are single-peaked over outcomes $\{a,b,c,d\}$. On the vertical axis, the number represents the preference ranking of the outcome, with 1 being most preferred. 
\begin{center}
    \begin{tikzpicture}[domain=0:3]
    \draw[ultra thin,color=gray] (-0.1,-0.1) grid (3,3);
    \draw[->] (-0.1,0) -- (3.2,0);
    \draw[->] (0,-.1) -- (0,3.2);
    \draw	(0,0) node[anchor=north] {a}
		(1,0) node[anchor=north] {b}
		(2,0) node[anchor=north] {c}
		(3,0) node[anchor=north] {d}
		(0,0) node[anchor=east] {4}
		(0,1) node[anchor=east] {3}
		(0,2) node[anchor=east] {2}
		(0,3) node[anchor=east] {1};
    \draw[thick,color=red] plot[id=x] function{x};
    \draw[thick,color=blue] (0,2) -- (1,3) -- (3,0);
    \draw[thick,color=green] (0,1) -- (2,3) -- (3,0);
    \draw[thick,color=orange] plot[id=exp] function{3-x};
\end{tikzpicture}
\end{center}

'b' is the Condorcet winner and the choice of the median voter. However, the outcome of the FPTP procedure is not 'b'.
\end{itemize}
\end{proof}

\section{An Impossibility Theorem Based on the Concept of Social Disappointment in Voting Systems}\label{impossibility}
In this section, we provide a version of the impossibility theorem regarding the notion of social disappointment in elections, showing that there is no voting rule for four or more candidates that simultaneously satisfies SDC and CWC.

Arrow's theorem in \cite{A1}, have motivated a vast amount of research on multicandidate elections. There
are now several dozen Arrow-type impossibility theorems that address a wide array of
social choice situations, but all have the same theme of the collective incompatibility of
conditions which, taken separately, seem reasonable and appealing \cite{bramsfishburn}.

Taylor proved in \cite{Taylor3} and also \cite[pp. 28-31]{Taylor}  that there is no voting procedure for three or more alternatives that satisfies the always-a-winner criterion, the independence of irrelevant alternatives, and the Condorcet winner criterion. What follows is an impossibility theorem based on the concept of social disappointment in voting systems. This theorem can be seen as part of the story of the difficulty with
"reflecting the will of the people." The proof of this theorem, like that of Arrow's theorem, makes critical use of the voting paradox of Condorcet \cite{Condorcet}.

\begin{theorem} \label{thm:impossibility}
There is no voting procedure for \textbf{four or more} alternatives that satisfies the SDC and the Condorcet winner criterion.
\end{theorem}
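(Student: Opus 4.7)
The plan is to argue by contradiction: assume a voting rule $V$ exists on some $(A, n)$ with $|A| \ge 4$ satisfying both SDC and CWC, and exhibit a single profile $P$ on which the two criteria render incompatible verdicts for one distinguished alternative.

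First I would isolate the structural obstruction. If an alternative $a$ sits at the bottom of $k$ ballots, then for every $y \ne a$ those $k$ voters prefer $y$ to $a$, so $W(y, a, P) \ge k$; for $a$ to satisfy the paper's weak-inequality form of CWC I additionally need $W(a, y, P) \ge W(y, a, P) \ge k$, and since $W(a, y, P) + W(y, a, P) = n$ this forces $k \le n/2$. Combined with the SDC premise $k \ge n/2$, this pins down $k = n/2$ exactly, with the remaining $n/2$ voters ranking $a$ strictly above every other alternative. This tight symmetry dictates the construction: I would take $n = 4$, fix four alternatives $a, b, c, d \in A$, and consider the profile in which voters 1 and 2 cast $a > b > c > d$ and voters 3 and 4 cast $b > c > d > a$. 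If $|A| > 4$, the extra alternatives are inserted in a fixed common order strictly between $d$ and $a$ on both kinds of ballots, which leaves every pairwise count involving $a$ unchanged and preserves $a$'s last-place status on the second group.

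I would then read off the two verdicts. Because $a = \min_i(A, P)$ for $i \in \{3, 4\}$, we have $|\{i \in N : \min_i(A, P) = a\}| = 2 = n/2$, so SDC forces $a \notin V(P)$. On the other hand, for every $y \in A \setminus \{a\}$ the first two voters rank $a$ above $y$ and the last two rank $y$ above $a$, giving $W(a, y, P) = W(y, a, P) = 2$; the inequality $W(a, y, P) \ge W(y, a, P)$ therefore holds for every $y$, and the iff-clause in CWC forces $a \in V(P)$. That is the desired contradiction.

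The main obstacle I anticipate is conceptual rather than computational: the pairwise verification is mechanical once the profile is written down, but the entire argument hinges on the paper's CWC being stated with the weak inequality $W(x, y, P) \ge W(y, x, P)$. If CWC had required strict pairwise dominance, the counting above would collapse (no Condorcet winner could sit at the bottom of at least $n/2$ ballots) and the theorem would be vacuous. Spotting that the weak inequality admits a ``pairwise-tied-with-everyone'' alternative is therefore the real crux; once noticed, the symmetric ``$a$ on top of half, $a$ on bottom of half'' construction essentially writes itself, and the padding argument for $|A| > 4$ reduces to checking that inserting identical blocks of alternatives strictly between $d$ and $a$ preserves both the last-place status of $a$ and the pairwise $W(a,y,P) = W(y,a,P) = 2$ equalities.
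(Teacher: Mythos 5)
Your structural analysis of why the distinguished alternative must sit at the top of exactly half the ballots and at the bottom of the other half is correct and matches the skeleton of the paper's construction. But there is a genuine gap in how you invoke CWC. In your profile ($a>b>c>d$ twice, $b>c>d>a$ twice), the alternative $b$ also satisfies $W(b,y,P)\ge W(y,b,P)$ for every $y$ (it ties $a$ two--two and beats $c$ and $d$ four--zero), so $a$ is \emph{not} the unique winner of Condorcet's method. The paper defines ``a Condorcet winner'' as the \emph{unique} winner of Condorcet's method, and the operative reading of CWC --- the one under which the companion proposition (that the hybrid UCC rule satisfies SDC and CWC for three alternatives) is true, and under which Table~2 marks Copeland and sequential pairwise voting as satisfying CWC --- is that a rule must elect $x$ when $x$ is that unique weak pairwise winner. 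Under that reading your profile puts no obligation on $V$ at all: a rule may simply output $\{b\}$, satisfying both SDC and CWC. If instead you insist on the literal ``iff'' form of the displayed definition, your argument goes through verbatim with three alternatives as well, which would contradict the paper's own three-alternative possibility result; that is a strong signal that the literal iff is not the intended hypothesis and that your proof proves the wrong statement.

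The missing idea --- and the reason the theorem is stated for \textbf{four or more} alternatives --- is that you must destroy every other alternative's claim to being a weak Condorcet winner. The paper does this by arranging the remaining $n\ge 3$ alternatives in a Condorcet cycle (each cyclic rotation appearing equally often in both halves of the electorate), so that every $x_j$ with $j\le n$ loses some pairwise contest strictly, while the distinguished alternative $x_{n+1}$ ties everyone $n$--$n$. Then $x_{n+1}$ is the \emph{unique} weak pairwise winner, CWC forces $x_{n+1}\in V(P)$, and SDC forces $x_{n+1}\notin V(P)$. A cycle among the remaining alternatives requires at least three of them, which is exactly where the hypothesis $|A|\ge 4$ enters; your padding trick of inserting extra alternatives in a fixed common order would not achieve this, since a commonly ordered block contains its own weak pairwise winner.
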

\begin{proof} 
Assume that there is a voting procedure that satisfies the Condorcet winner criterion. We show that if this procedure is applied to the profile that consists of the Condorcet's voting paradox \cite{Condorcet}, then it produces a winner which will lead to social disappointment.

Suppose that there is a voting procedure that satisfies the Condorcet  winner criterion. Consider the following profile for (n+1) alternatives and 2n voters, where $n\ge 3$ (each column corresponds to a ballot):
\begin{equation*}
 \begin{array}{llllllllll}
x_{n+1} & x_{n+1} & \cdots & x_{n+1} & x_{n+1} & x_1& x_2&\cdots&x_{n-1}&x_n\\
x_1& x_2&\cdots&x_{n-1}&x_n &x_2&x_3&\cdots&x_n & x_1 \\
x_2&x_3&\cdots&x_n & x_1 & x_3&x_4&\cdots&x_1& x_2\\
\vdots&\vdots&\ddots&\vdots& \vdots&\vdots&\vdots&\ddots&\vdots& \vdots \\
x_{n-1} & x_n & \cdots & x_{n-3} & x_{n-2} & x_n&x_1&\cdots&x_{n-2}&x_{n-1} \\
x_n&x_1&\cdots&x_{n-2}&x_{n-1} & x_{n+1} & x_{n+1} & \cdots & x_{n+1} & x_{n+1} \\
\end{array}   
\end{equation*}

The alternative $x_{n+1}$ is the unique social choice when the Condorcet's method is used. Although the alternative $x_{n+1}$ is a social choice, it is at the bottom of half of the individual preference lists and so social disappointment has taken place. 
\end{proof}

\begin{proposition}
There is a voting procedure for \textbf{three} alternatives that satisfies the SDC and the Condorcet winner criterion.
\end{proposition}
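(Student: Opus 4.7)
The plan is to construct an explicit voting procedure $V$ for three alternatives that simultaneously satisfies SDC and the Condorcet winner criterion. The construction combines Condorcet's method (applied when a Condorcet winner exists) with an SDC--preserving fallback, and the argument of correctness rests on a short counting observation that is valid only when $|A|=3$.

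First I would define $V$ as follows: given any profile $P$ on $A=\{x,a,b\}$, if there exists a Condorcet winner $w$---an alternative with $W(w,y,P)>W(y,w,P)$ for every $y\neq w$---then set $V(P)=\{w\}$; otherwise set $V(P)=LU(P)$, using the Least Unpopular procedure of Definition~\ref{lu}. By construction, whenever a Condorcet winner exists it is the unique output, so CWC holds trivially. The remaining work is to verify SDC.

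If no Condorcet winner exists, then $V(P)=LU(P)$ and Proposition~\ref{lu2} already delivers SDC. The substantive case is when $V(P)=\{w\}$ and $w$ is the Condorcet winner; I need to show that $w$ cannot lie at the bottom of at least half the ballots. Let the two remaining alternatives be $a$ and $b$, and put $k=|\{i\in N:\min_i(A,P)=w\}|$. Every voter counted by $k$ ranks both $a$ and $b$ above $w$, so $W(a,w,P)\ge k$ and $W(b,w,P)\ge k$. Because $w$ is a strict Condorcet winner, $W(w,a,P)>W(a,w,P)$, which together with $W(w,a,P)+W(a,w,P)=n$ forces $W(w,a,P)>n/2$ and therefore $W(a,w,P)<n/2$. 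Hence $k\le W(a,w,P)<n/2$, so $w$ is not at the bottom of at least half the ballots and SDC is preserved.

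The only non-routine step is the counting argument above, and it is essentially effortless. What makes the proposition work---and is precisely where the analogous argument breaks for $|A|\ge 4$---is that with only two non-winner alternatives, a voter ranks $w$ last \emph{iff} she places both remaining alternatives above $w$. This equivalence collapses as soon as a fourth alternative is introduced: a voter can rank $w$ above some alternative while still leaving $w$ at the bottom of her list. That is the combinatorial asymmetry exploited by Theorem~\ref{thm:impossibility} in the opposite direction, so the three--alternative case is the unique regime in which SDC and CWC can coexist.
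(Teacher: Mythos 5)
Your construction has a genuine gap in the CWC branch: you trigger the Condorcet case only when some $w$ satisfies $W(w,y,P)>W(y,w,P)$ for \emph{every} $y$, i.e.\ only for strict Condorcet winners. But the paper's Condorcet winner is the \emph{unique winner of Condorcet's method}, which uses weak majorities ("at least half"), so a Condorcet winner may merely tie one of its opponents. In that situation your procedure falls through to LU, which need not elect it. Concretely, take $n=6$ and $A=\{w,a,b\}$ with three ballots $w>b>a$, one ballot $a>w>b$, one ballot $a>b>w$, and one ballot $b>a>w$. Then $W(w,a)=W(a,w)=3$, $W(w,b)=4$, $W(b,a)=4$, so $w$ weakly beats both opponents while $a$ loses to $b$ and $b$ loses to $w$; hence $w$ is the unique winner of Condorcet's method, i.e.\ the Condorcet winner. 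There is no strict Condorcet winner, so your rule returns $LU(P)$; but the last-place counts are $lp(a)=3$, $lp(w)=2$, $lp(b)=1$, so $LU(P)=\{b\}$ and the Condorcet winner is not even among the winners. Your counting argument ($lp(w)\le W(a,w)<n/2$) is correct and tidy, but it only covers the strict case.

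The strict reading also cannot be the intended one, because your counting step nowhere uses $|A|=3$: a strict Condorcet winner is ranked last by fewer than $n/2$ voters for \emph{any} number of alternatives, so your procedure would establish the proposition for four or more alternatives as well, contradicting Theorem~\ref{thm:impossibility} (whose witness profile has $x_{n+1}$ tying every rival $n$--$n$ and is a Condorcet winner only under the weak reading). The genuinely hard case is exactly the boundary your branch condition skips: $n$ even, with half the voters ranking $x$ last and the other half ranking $x$ first, so that $x$ ties everyone. The paper's proof (via a "unique-Condorcet, else Coombs" hybrid) handles it by observing that with three alternatives one of the two remaining candidates must weakly beat the other and also ties $x$, so the Condorcet-method winner set has at least two members, no \emph{unique} Condorcet winner exists, and the SDC-safe fallback applies; with four or more alternatives the rivals can form a cycle that eliminates each of them, which is what makes the impossibility go through. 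Your closing explanation of why three alternatives are special (a voter ranking $w$ above some alternative while leaving $w$ at the bottom) is not the operative mechanism -- a voter who ranks $w$ last ranks everything above $w$ regardless of $|A|$. To repair the proof, either adopt the weak branch condition (select the Condorcet candidate whenever it is the unique winner of Condorcet's method) and add the tie analysis above, or keep LU as the fallback but prove that it is only invoked when no unique weak Condorcet winner exists.
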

\begin{proof}
None of the voting procedures mentioned in this article satisfies the SDC and the Condorcet winner criterion. Here we design a hybrid voting rule that we call \textit{unique-Condorcet Coombs} (UCC) method and show that this protocol satisfies the SDC and the Condorcet winner criterion. UCC is the method that selects the Condorcet candidate when it is unique, otherwise selects the output of the Coombs method. 

If in the Condorcet method more than half of voters place x at the bottom
of individual preference lists then for sure x would not be a social choice with UCC method, and
in this case social disappointment would not occur. However, if the number of voters is an
even number and precisely half of the voters place x at the end of their lists, one of these two
possibilities will happen:
\begin{itemize}
\item [(1)] Not all the voters in the other half place x at the top of their lists, in which 
case, x will definitely not be a social choice with the UCC method, and social disappointment will not occur.
\item [(2)] All the voters in the other half also place x at the top of their lists, in which case x will definitely be in the set of social choice with Condorcet method. In this case, if there are only three alternatives, the set of social choice with Condorcet method will certainly have more than one member, and so the UCC procedure uses Coombs method to determine the output of the election. Therefore, SD will not occur. 
\end{itemize}
As a result, UCC method satisfies the SDC and the Condorcet winner criterion when we have three alternatives in the election.
\end{proof}

\section{Evaluation} 
In section \ref{SFSD}, we showed Coombs, LU, and LUR voting protocols avoid social disappointment. In this section, we empirically evaluate the performance of discussed voting protocols from two point of views: \emph{avoidance  of social disappointment} and \emph{resistance to manipulation} via strategic voting because it is an important benchmark for desirability of a voting procedure \cite{conitzerwalsh}. For this purpose, we designed two different setups and we conducted a sensitivity analysis to evaluate the performance of voting rules in both cases. 
\subsection{Experimental Setup}
In this section, we first define the independent (IV-SD) and dependent variables
(DV-SD) for performance evaluation of voting procedures relating to SDC and resistance against manipulation.

\subsubsection{Social disappointment experimental setup}
\textbf{IV-SD-1}: \textit{number of candidates}. To identify how vulnerable
    each voting rule is with respect to SDC,
    we vary the number of candidates between 3 and 6. For two reasons we did not consider the number of candidates greater than 6. First, in many real-world scenarios like presidential elections in many countries in the world \footnote{\url{https://libguides.princeton.edu/elections/foreign}}, variants of voting rules are used, and most of the time the number of candidates is relatively small. Second, based on our experiments, the chance of social disappointment occurrence in elections decreases when the number of candidates increases, and so we did not consider a number of candidates greater than 6.
    
\noindent\textbf{IV-SD-2}: \textit{number of voters}. Voting protocols may perform differently depending on how many voters have participated in the election. So, we  generate profiles with different number of voters between 6 and 10. Based on our experiments, the chance of occurrence of SD decreases in elections,  where the number of candidates is fixed but the number of voters increases \cite{annonymous}. This is in particular relevant for considerable number of situations such as multiagent systems, where the number of voters are relatively small.

\noindent\textbf{IV-SD-3}: \textit{number of profiles}. Since a well-designed experiment takes into account the randomness of elections, we generate 1000 profiles for each pair of $i$ and $j$, where $i$ is the number of candidates and $j$ is the number of voters.
    
    \noindent\textbf{IV-SD-4}: \textit{voting procedures}. We compare the performance of nine voting rules i.e., plurality,Condorcet, Borda, Hare, Coombs, Copeland, Seq. Pairs., LU, and LUR with respect to SD.
    
    \noindent\textbf{DV-SD-1}: \textit{outcome of elections}. To assess  the vulnerability of voting rules concerning SD, we check the occurrence of SD in elections for each random profile and voting procedure.

In summary, in order to test the vulnerability of voting rules to electing a candidate who violates social disappointment criterion, we generated 1000 random profiles for each pair of $i$ and $j$, where $i=3,4,5,6$ is the number of candidates and $j=6,7,8,9,10$ is the number of voters. For this purpose, we used \texttt{sample()} function in R language to generate a random ballot based on the number of candidates for each voter in a profile. For simplicity, we assumed that the distribution of the society is uniform and all of ballots are valid. Then, we run nine mentioned voting procedures in this paper i.e., plurality, Condorcet, Borda, Hare, Coombs, Copeland, Seq. Pairs., LU, and LUR on each profile. Finally, we checked whether SD for each election has happened or not. An R language package that implements our algorithms is available at \cite{annonymous}.

\subsubsection{Manipulation experimental setup}
Since an important benchmark for desirability of a voting rule is its resistance to manipulability via strategic voting, we designed four scenarios of manipulating elections such as control and bribery, and tested the performance of each voting protocol in all different scenarios.  Before explaining the details of setup for each scenario, first, we briefly review the meaning of manipulation, control, and bribery in elections.

In many situations, voters may vote strategically. That is, they may declare preferences
that are not their true ones, with the aim of obtaining a better outcome for themselves. This is often referred to as \textit{manipulation} or \textit{strategic voting}. Voting rules that are never manipulable are also referred to as
\textit{strategyproof} \cite{conitzerwalsh}.  The Gibbard-Satterthwaite theorem shows that there is no strategy-proof
voting rule that simultaneously satisfies certain combinations of desirable properties \cite{Gibbard,Satterthwaite}. Control and bribery are two families of problems modeling various
ways of manipulating elections. Control problems model situations where
some entity, usually referred to as the chair or the election organizer, has some ability
to affect the election structure by adding or deleting voters or candidates. On the other hand, bribery models situations where the structure of the election stays intact (we have the same candidates and the same voters), but some outside agent pays the voters to change their votes \cite{FaliszewskiRothe,EHH}. 

Now, we define the independent (IV-M) and dependent variables (DV-M) for performance evaluation of voting procedures against different scenarios of manipulation.

\textbf{IV-M-1}: \textit{number of candidates}. To identify how vulnerable each voting rule is with respect to manipulation, we vary the number of candidates between 3 and 10. In many real-world situations\footnote{\url{https://libguides.princeton.edu/elections/foreign}}, variants of voting rules is used and most of the time the number of candidates are relatively small and less than 10. 

\textbf{IV-M-2}: \textit{number of voters}. In order to create a more realistic profiles, we  generate random profiles with different number of voters i.e., 10, 100, and 1000.

\textbf{IV-M-3}: \textit{number of profiles}. 
Since a well-designed experiment takes into account the randomness of elections, we generate 30 profiles for each pair of $i$ and $j$, where $i$ is the number of candidates and $j$ is the number of voters. 

\textbf{IV-M-4}: \textit{manipulation scenarios}. We consider 4 manipulation scenarios to evaluate the robustness of voting procedures against manipulation (we elaborate each of them later in this section):
\begin{enumerate}
    \item Constructive Control by Adding/Deleting Voters' Ballots,
    \item Constructive Control by Adding/Deleting Candidates,
    \item Bribery/Self-manipulation, and
    \item Social Network and Social Media Influence on Voters' Preferences During Election Days.    
\end{enumerate}

\textbf{IV-M-5}: \textit{voting procedures}.
We compare the performance of nine voting rules i.e., plurality, Condorcet, Borda, Hare, Coombs, Copeland, Seq. Pairs., LU, and LUR with respect to manipulation scenarios.

\textbf{DV-M-1}: \textit{outcome of elections after manipulation}. We compare the outcome of original elections with the outcome of elections after each manipulation scenario for a large number of configurations.   

In summary, to compare the performance of LU and LUR with other seven well-known voting rules against manipulation, we  generated 30 random profiles for each pair of $i$ and $j$, where $i=3,4,\dots,10$ is the number of candidates and $j=10,100,1000$ is the number of voters. For this purpose, we used the \texttt{sample()} function in R language to generate a random ballot based on the number of candidates for each voter in a profile, we assumed that the distribution of the society is uniform and all of ballots are valid. Then, we run all above mentioned voting procedures on each profile. 

The manipulation scenarios we considered for the independent variable \textbf{IV-M-4} are based on the following justification:

\begin{enumerate}
    \item \textbf{Constructive Control by Adding/Deleting Voters' Ballots}: The issue of control by adding, deleting voters ballots (or a combination of them) is very natural and, indeed, happens in real-life political elections. For example, it is widely speculated that "adding" about 8 millions ballots in favor of
    Mahmoud Ahmadinejad to the 2009 Islamic Republic of Iran presidential election had the effect of ensuring Ahmadinejad's victory
    (otherwise, Mir-Hossein Mousavi would have won or gone to the second round i.e., the runoff election)\footnote{\url{http://www.bbc.com/persian/46110885}}. We consider a situation in which the election organizer (in this example, the chair of the electoral commission) is able to delete randomly 10 percent of voters' ballots  and replace all of them by his or her individual preference list. For this case, we removed randomly 10 (or 20) percent of voters' ballots  and then replaced all of them with a fixed random ballot.
    \item \textbf{Constructive Control by Adding/Deleting Candidates}: One control action the chair might exert is to change the candidate set, either by
    adding some new candidates from a given set of spoiler candidates (hoping to make
    $p$'s most competitive rivals weaker relative to $p$), or to delete up to $k$ candidates from
    the given election (to get rid of $p$'s worst rivals) \cite{FaliszewskiRothe}. For example, it is widely speculated that "adding"
    Nader to the 2000 U.S. presidential election had the effect of ensuring Bush's victory
    (otherwise, Gore would have won) \cite{FaliszewskiRothe}. We consider a situation in which the chair's goal in exerting some control action is to
    make a given candidate $p$ the winner of the resulting election. In this scenario, candidate $a$ performs very well in pre-election polls and candidate $b$ is the most competitive rival of $p$. The chair deletes candidate $b$ in order to make $p$'s most competitive rival (i.e., candidate $a$)  weaker relative to $p$. We removed the candidate who has obtained third rank in the election without manipulation (one can consider this as the result of pre-election polls) to increase the chance of the candidate, who has earned the second rank in the pre-election polls, to being a winner.
    \item \textbf{Bribery/Self-manipulation}: Election bribery problems model situations where an outside agent wants a particular alternative to win and pays
    some of the voters to vote as the agent likes \cite{MRS,Yong,FaliszewskiRothe,KaczmarczykFaliszewski}. Self-manipulation problem addresses situations where the supporters of candidate $p$ know that they have no chance to win the election; however, most of them agree that candidate $q$-who performs well in pre-election polls-is the closest person to their political ideals among other candidates. So, this group of voters forms a coalition and manipulates the election by voting non-truthfully, i.e., by swapping between their sincere first choice candidate $p$ and their second choice candidate $q$ who has a better chance of winning the election. Assume that other voters vote truthfully. Here, we consider a situation in which there is a minority group of voters who decide that their second choice candidate has a better chance of winning the election, and so the members of the group swap their sincere first choice with their second choice. For this case, according to the result of the election without manipulation (indicated by the result of pre-election polls) there is a minority group of voters who decides that its second choice candidate has a better chance of winning the election, so it swaps its sincere first choice with its second choice.
    \item \textbf{Social Network and Social Media Influence on Voters' Preference Lists}: As online social networks have become significant sources of information for potential voters, a new tool in an attacker's arsenal is to effect control by harnessing social influence, for example, by spreading fake news and other forms of misinformation through online social media \cite{WilderVorobeychik}. Also, it is possible that the sincere preferences of the voters are influenced by the votes of their friends. In this case, each
    agent votes strategically, taking into consideration both her preferences, and her (limited) information about the preferences of other voters, assuming that the information the agent has comes from her friends in the social network and from a public opinion poll \cite{SinaHazonHassidim}. In real life, political parties often try to influence the outcome of elections by means of social media (e.g., see \cite{BessiFerrara,KollanyiHowardWoolley}). For example, social bots (or chatbots)\footnote{The chatbots are basic software programs with a bit of artificial intelligence and rudimentary communication skills. They can send messages on Twitter based on a topic, usually defined on the social network by a hashtag symbol, like \#Clinton.} distorted the 2016 U.S. presidential election online discussion. According to \cite{BessiFerrara}, the presence of social media bots can indeed negatively affect democratic political discussion rather than improving it, which in turn can potentially alter public opinion and endanger the integrity of the presidential election.
    Also, authors of \cite{WilderVorobeychik} show that election control through social influence is a salient threat to election integrity. 
    Here, we consider a situation that 10 percent of voters exchange their sincere first choice candidate with candidate $p$ under influence of $p$'s advertisements, fake news, rumors, and chatbots through social media during election days. For this case, we randomly selected 10 percent of voters and changed their ballots in a way that  the candidate, who has earned the second rank in the  election without manipulation, became their first choice (without changing their preferences about other candidates).  
\end{enumerate}

In the next step, we run our voting rules on each new profile after manipulation. Finally, we checked that whether manipulation had effect on the outcome of each election or not.
In the next subsection, we show that voting procedures LU, LUR, and Coombs had better performance against manipulation, in all of above scenarios, than other seven voting rules in this paper.

\subsection{Experimental Results for Social Disappointment in Voting Procedures}
Figure \ref{fig:sd} shows the performance of voting protocols against social disappointment.  As we expected (see section \ref{SFSD}), social disappointment does not happen for Coombs, LU, and LUR methods. However, other procedures cannot prevent social disappointment. Among voting procedures, Plurality has the worst performance. An interesting observation is that the number of social disappointment occurrence in elections decreases when the number of candidates increases as one can see in Figure \ref{fig:sd}. Another noticeable point is that Borda and Copeland violate SDC in fairly small number of cases, indicating that  social disappointment happens for these methods just in rare cases (see the proof of the Proposition \ref{lu1} that supports this observation theoretically). 
\begin{figure}[ht]
	\centering
	\includegraphics[scale=.3]{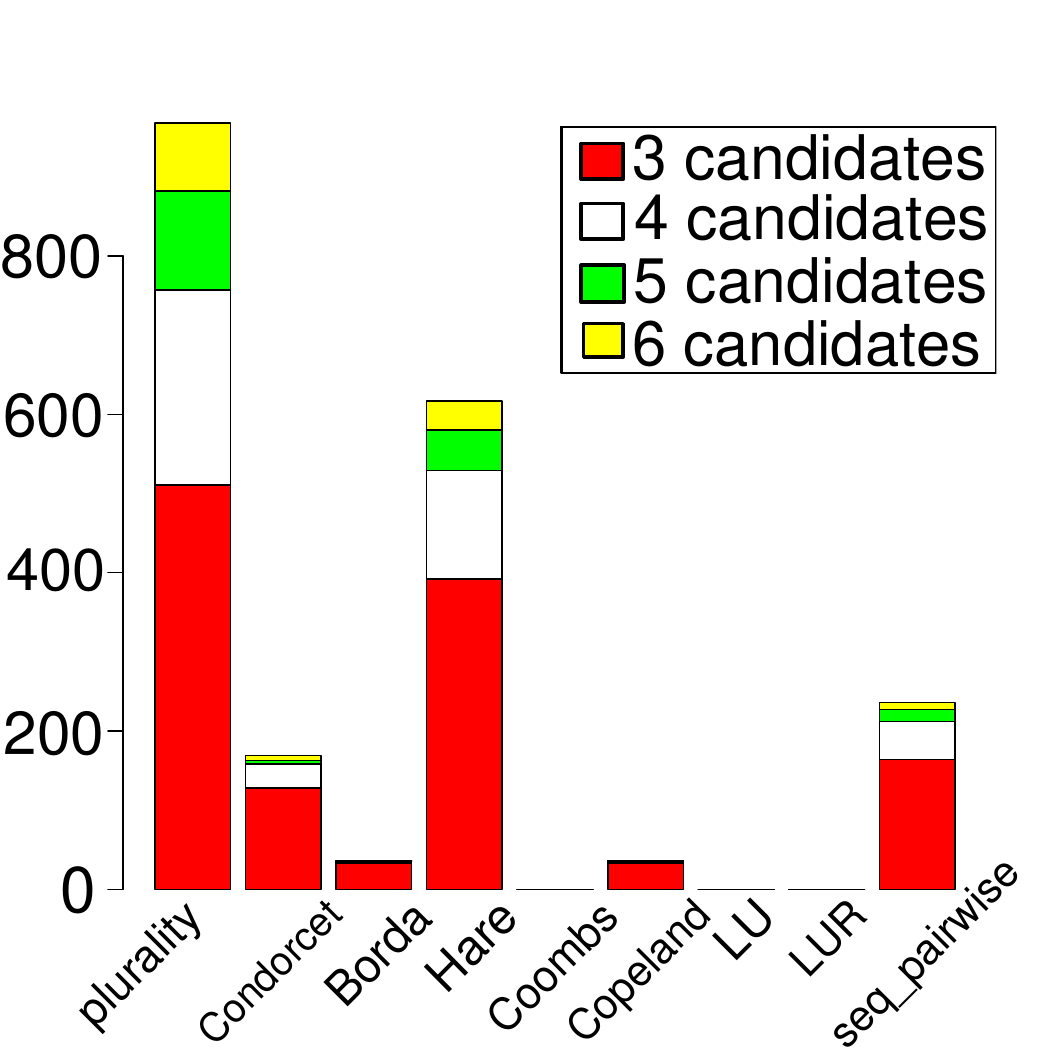}
	\caption{Performances of voting procedures: Number of occurrence of social disappointment in different elections has been shown with different colors based on the number of candidates. Social disappointment does not happen for Coombs, LU, and LUR. 
	When the number of candidates increases, the number of social disappointment occurrence in elections decreases. The performance of Borda and Copeland rules are acceptable in this regard.
	} \label{fig:sd}
\end{figure}

\subsection{Experimental Results for Resistance Against Manipulation in Voting}
In this section, we compare the outcome of original elections with the outcome of elections after each manipulation scenario for each scenario explained in experimental setup:
\begin{enumerate}
    \item \textbf{Constructive Control by Adding/Deleting Voters' Ballots}: In this case, as shown in Figure \ref{fig:10percent}, LU and LUR are more robust against manipulation in this scenario compared to other procedures. Figure~\ref{fig:10percent} shows that the number of affected elections in this scenario is independent of the number of candidates. Except for LU and LUR, other procedures perform as bad as plurality rule in this scenario.
    \begin{figure}[ht]
	\centering
	\includegraphics[scale=.3]{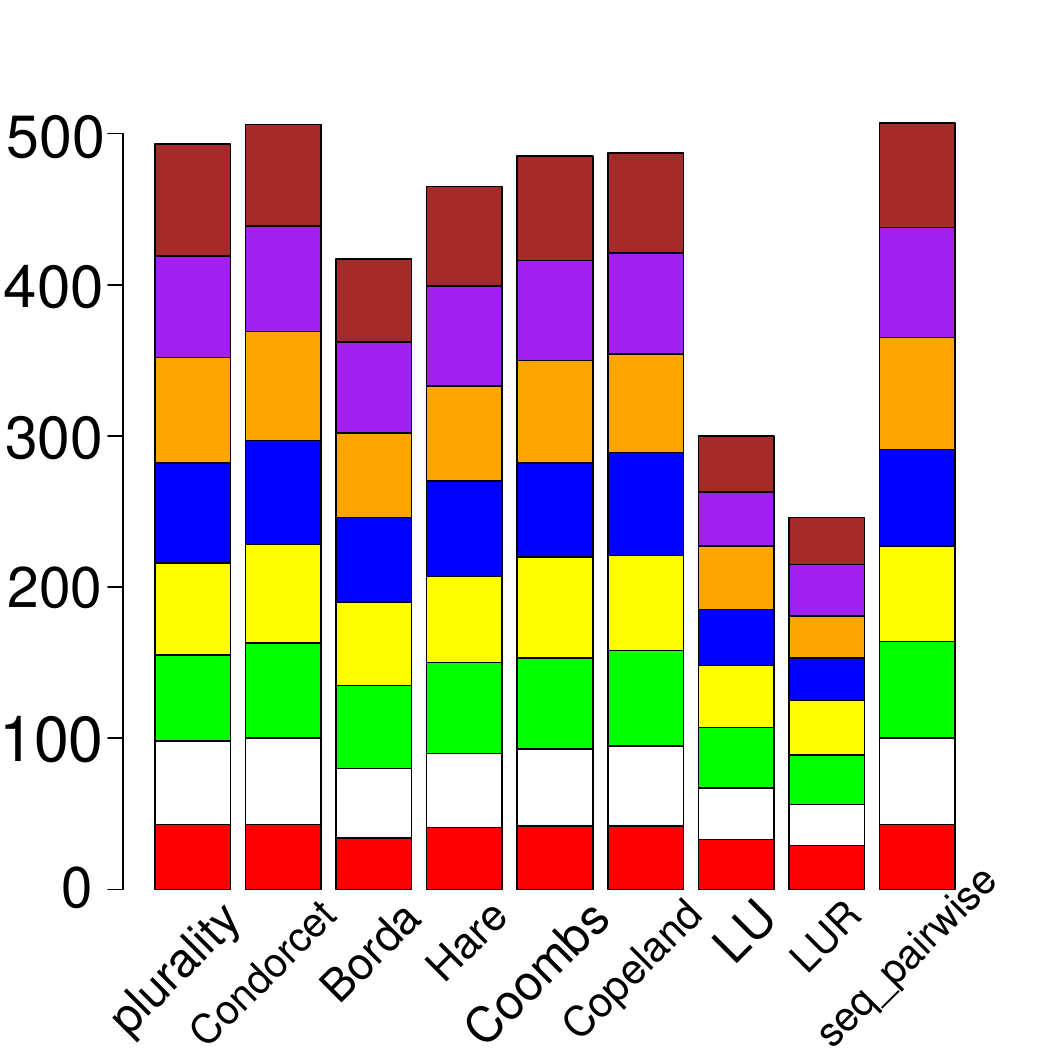}
	\caption{Performances of voting procedures against manipulation in constructive control by deleting and replacing 10 percent of voters' ballots. LU and LUR are more robust than others, and a meaningful difference regarding the performance of other methods cannot be seen  in this scenario. 
	The number of affected elections is shown with different colors corresponding to the number of candidates.
	} \label{fig:10percent}
	\end{figure}
	\item \textbf{Constructive Control by Adding/Deleting Voters' Ballots} (second scenario): As shown in Figure \ref{fig:20percent}, LU and LUR are more robust against manipulation in this constructive control manipulation scenario compared to other procedures. No meaningful difference can be seen relating to the performance of other seven protocols in this scenario. Figure \ref{fig:20percent} shows that the number of affected elections in this scenario is independent of the number of candidates.
    \begin{figure}[ht]
	\centering
	\includegraphics[scale=.3]{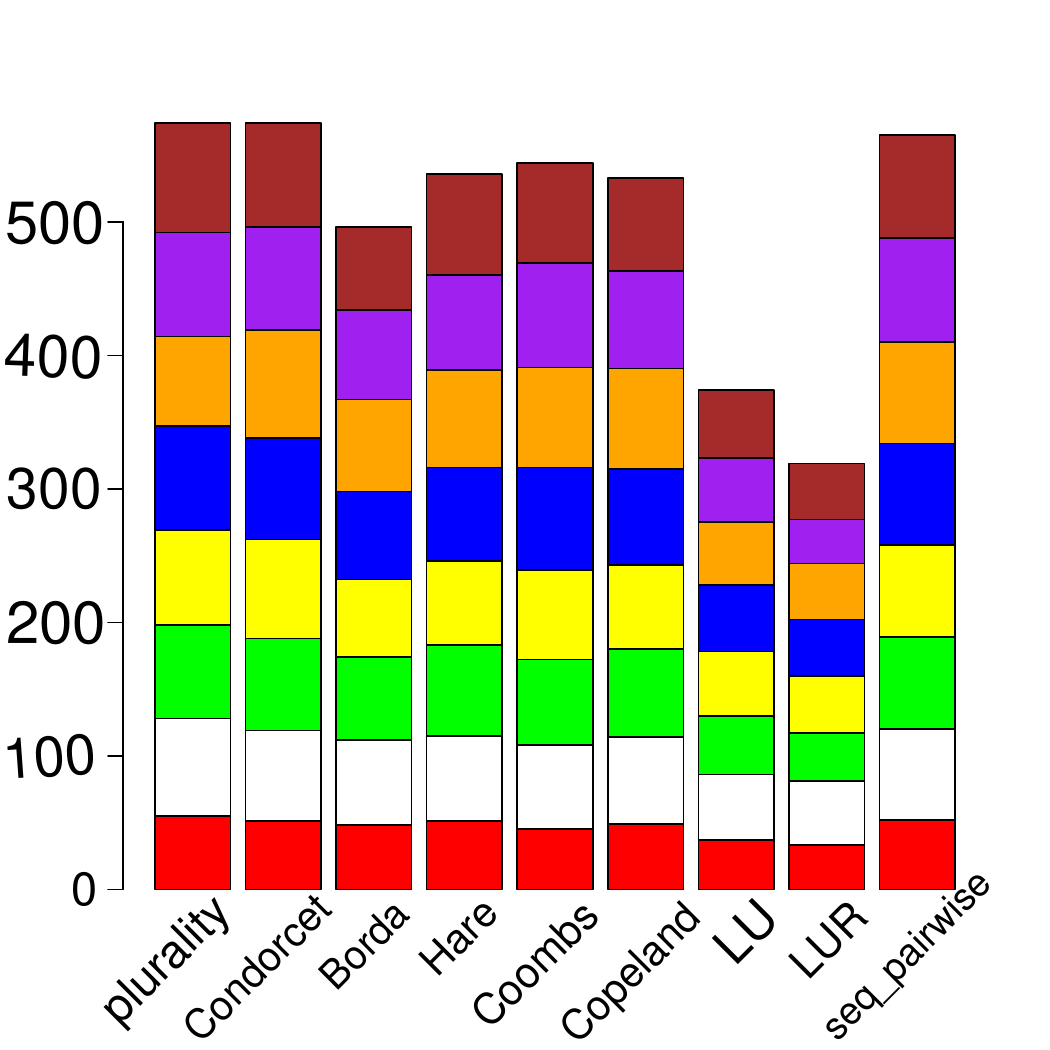}
	\caption{Performances of voting procedures against manipulation in the scenario of constructive control by deleting and replacing 20 percent of voters' ballots. LU and LUR present better performance than other methods, and also other procedures perform as bad as plurality rule in this scenario. 
	The number of affected elections is shown with different colors corresponding to the number of candidates.
	} \label{fig:20percent}
    \end{figure}
    \item \textbf{Constructive Control by Adding/Deleting Candidates}: As shown in Figure \ref{fig:quit}, Coombs, LU and LUR are more robust against manipulation compared to other procedures. 
    In contrast with the first scenario i.e., constructive control by adding/deleting voters' ballots, Copeland performs slightly better than Borda in this case. Also, in the evaluated scenarios, the number of affected elections in this scenario does not depend on the number of candidates. 
    \begin{figure}[ht]
	\centering
	\includegraphics[scale=.3]{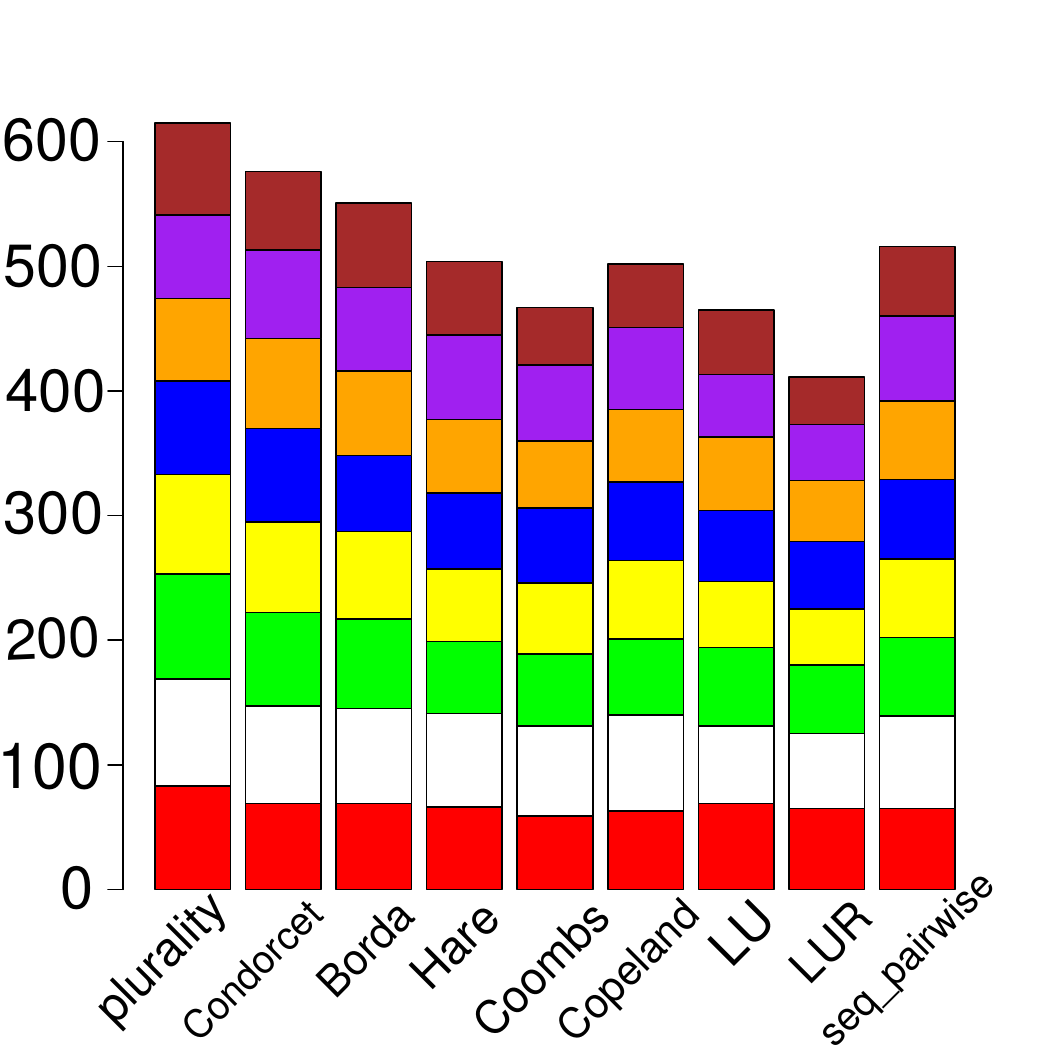}
	\caption{Performances of voting procedures against manipulation in constructive control by deleting a candidate. Coombs, LU, and LUR have the best performance and plurality has the worst. 
	Number of affected elections has been shown with different colors corresponding to the number of candidates.
	} \label{fig:quit}
    \end{figure}
    \item \textbf{Bribery/Self-manipulation}:  As shown in Figure \ref{fig:bribery}, LU and LUR are more robust against manipulation in this scenario compared to other procedures. 
    Except for plurality and Borda rules, the other seven well-known methods have acceptable performance in this scenario. 
    In contrast with first and last scenario, Borda surprisingly does not perform as good as Copeland, Condorcet, Seq. Pairs., Hare, and Coombs. 
    \begin{figure}[ht]
	\centering
	\includegraphics[scale=.3]{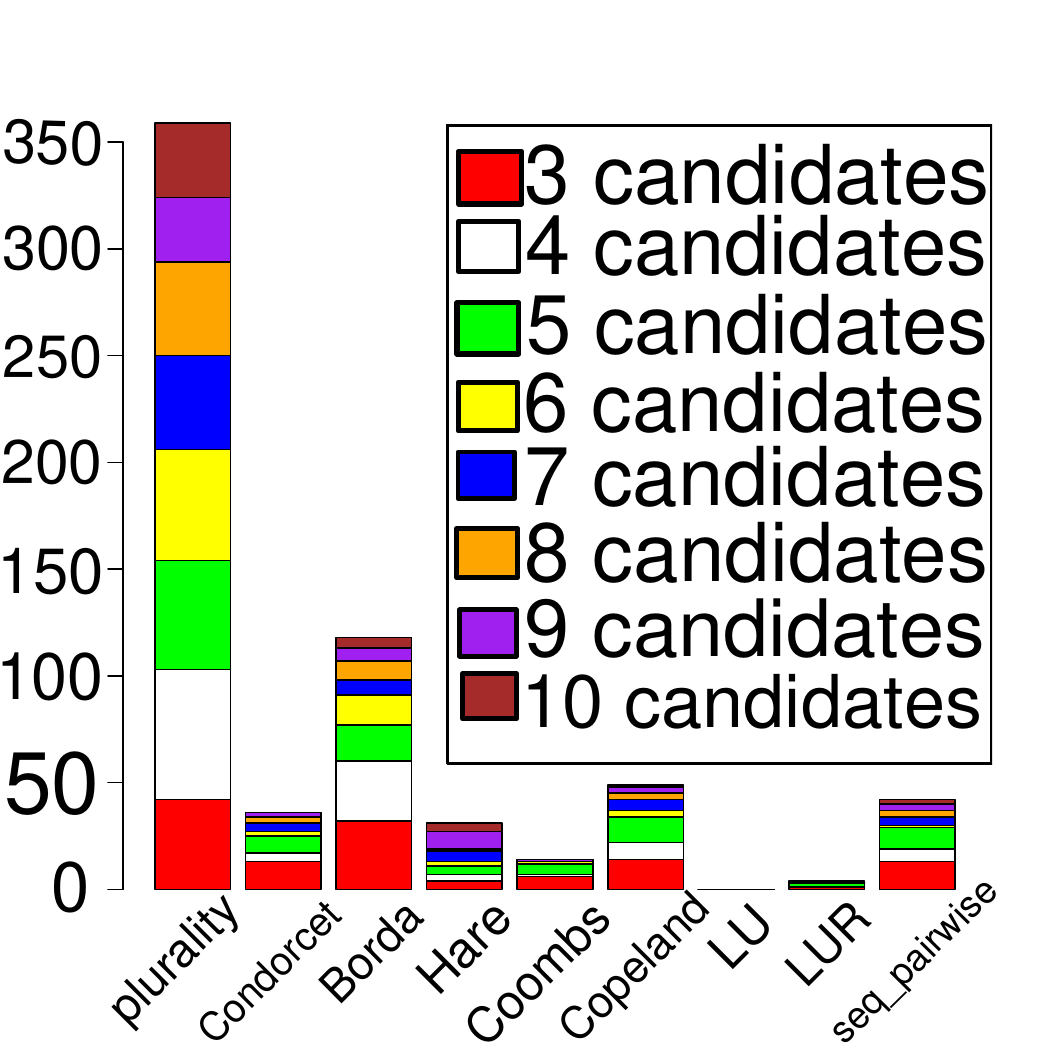}
	\caption{Performances of voting procedures against manipulation in bribery/self-manipulation. The best performance belong to LU and LUR. Except for plurality and Borda rules, other seven methods have acceptable performance. 
	The number of affected elections is shown with different colors corresponding to the number of candidates.
	} \label{fig:bribery}
    \end{figure}
    \item \textbf{Social Network and Social Media Influence on Voters' Preference Lists}:  As shown in Figure \ref{fig:influence}, LU and LUR are more robust against manipulation. The worst performance belongs to plurality, and the Coombs method has the best performance after LU and LUR. Also, the number of affected elections in this scenario is independent of the number of candidates. 
    \begin{figure}[ht]
	\centering
	\includegraphics[scale=.3]{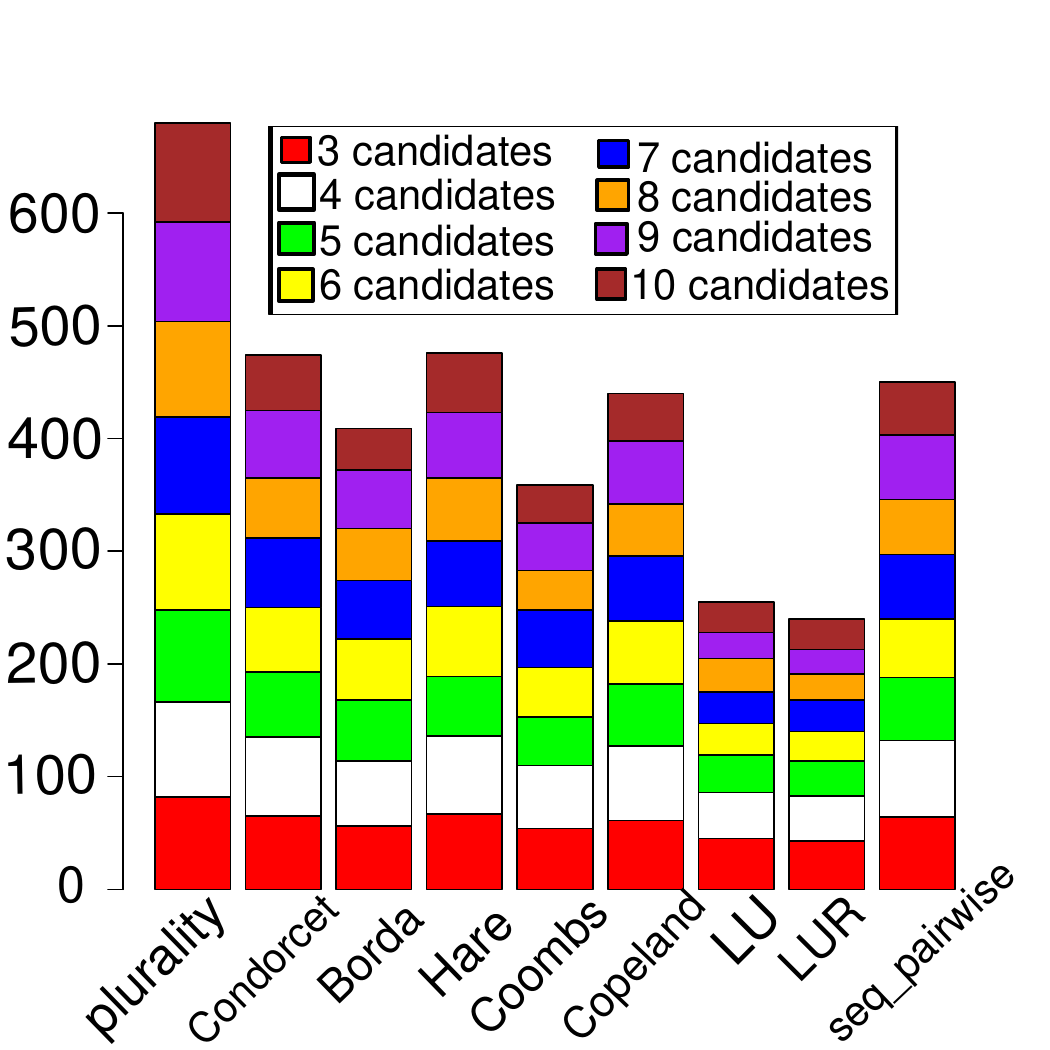}
	\caption{Performances of voting procedures against manipulation in the context of social media influencer. LU and LUR have the best performance. Except for plurality, there is no meaningful difference between other remaining rules.
	The number of affected elections is shown with different colors corresponding to the number of candidates.
	} \label{fig:influence}
    \end{figure}
\end{enumerate}
Compared to seven well-known voting rules, all of experimental results indicate that LU and especially LUR' voting protocols are considerably better and more resistant to manipulations. The implication of this study is that election experts who have been advocating the alternative vote/instant runoff may advocate also LU and especially LUR rules to avoid social disappointment and electing extremist candidate with substantial first round but little overall support in elections.

\section{Threats to validity}
In our experiments, we increase representativeness
by carefully designing scenarios to model manipulation in elections, using a published approaches explained in \cite{FaliszewskiRothe} and combining it with a custom approach for generating random profiles based on the insight from real elections. Nonetheless, our results must be interpreted within the constraints of how those models were generated. In fact, the use of synthetic data and
a controlled election environment does not consider several other scenarios, such as influence of diversity in district-based elections \cite{LewenbergLevRosenschein} or the impact of weighted voting on social disappointment.

For evaluations with real elections, voter diversity cannot be excluded and may affect the results, even though we carefully established
ground truth by measuring the performance of voting procedures on uniform randomness and repeating random elections many times.
We implemented the approaches according to the description provided
in the literature \cite{FaliszewskiRothe} and we set the parameters according to the recommendations provided by the authors, but cannot
exclude small differences due to implementation constraints. To take into consideration the randomness of profiles, we generated random profiles many times and calculated the absolute frequencies of all affected elections.
Despite confidence from synthetic data, the reader must be careful when generalizing results beyond the studied conditions. For example, the results cannot be generalized to weighted voting or non-uniform societies.

\section{Conclusions}
In this paper, we proposed two new concepts called social frustration and social disappointment-both to make less likely the election of candidates with limited overall support and to mitigate ethnic conflict and polarization in divided societies.  
We have also designed two new voting rules to prevent social disappointment in elections. In addition, a version of the impossibility theorem stated and proved regarding social disappointment in elections, showing that there is no voting rule for four or more candidates that simultaneously satisfies avoiding social disappointment and Condorcet winner criteria. Finally, we empirically evaluated the occurrence of social disappointment and we showed that the performance of our proposed protocols is superior to that of seven other well-known voting rules against manipulation in four different scenarios.

\begin{acknowledgements}
If you'd like to thank anyone, place your comments here
and remove the percent signs.
\end{acknowledgements}

%
%

\bibliographystyle{spbasic}      
\bibliography{sample}   


\end{document}